\pgfplotsset{compat=1.3}
\newcommand*{\pd}[3][]{\frac{\partial^{#1} #2}{\partial #3}}
\newcommand\tuu{0.2033}
\newcommand\tud{0.0328}
\newcommand\vod{1.5415}
\newcommand\vot{1.7454}
\newcommand\voq{1.8024}
\newcommand\vuu{0.578}
\newcommand\vud{0.6155}
\newcommand\vut{0.6358}
\newcommand\vuq{0.6432}
\newcommand\vmuu{1.3699}
\newcommand\vmud{1.5743}
\newcommand{\simDNormale}{\stackrel{\mathclap{d}}{\sim}\mathcal N}
\newcommand{\Sig}{\sigma^2}
\newcommand*{\Vt}[1]{\Gamma^{t_{#1}}_{t_{#1}}}
\newcommand{\R}{\mathbb R}
\DeclareMathOperator*{\argmin}{arg\,min}
\newcommand{\parallelsum}{\mathbin{\!/\mkern-5mu/\!}}
\newtheorem{definition}{Definition}
\newtheorem{lemma}{Lemma}
\newtheorem{theorem}{Theorem}
\newtheorem{prop}[theorem]{Proposition}
\newtheorem{corollary}{Corollary}
\begin{document}
%
\title{Optimal Measurement Times for a Small Number of Measures of a Brownian Motion over a Finite Period}
%
%
%

\author{Alexandre~Aksenov, Pierre-Olivier~Amblard, Olivier~Michel, Christian~Jutten
\thanks{This work has been partly supported by the European project ERC-2012-AdG-320684-CHESS.}}

%
%

\markboth{IEEE Transactions on Signal Processing,~Vol., No.}%
{Shell \MakeLowercase{\textit{et al.}}: Optimal measurement times in a Kalman filter}
%



\maketitle

\begin{abstract}
The measure timetable plays a critical role for the accuracy of the estimator. This article deals with the optimization of the schedule of measures for observing a random process in time using a Kalman filter, when the length of the process is finite and fixed, and a fixed number of measures are available. The measuring devices are allowed to differ.  The mean variance of the estimator is chosen as criterion for optimality. 
The cases of $1$ or $2$ measures are studied in detail, and analytical formulas are provided. 
\end{abstract}

\begin{IEEEkeywords}
Random walk, Wiener process, Kalman filter, Multimodality, Optimal Sampling.
\end{IEEEkeywords}

%
\IEEEpeerreviewmaketitle

\section{Introduction}
%
%
%
%

\IEEEPARstart{W}{hen} a latent phenomenon is observed through different acquisition methods, more information can be acquired than from a single method, but making the most of these measurements is a challenge \cite{Lahat,RoumeliotisBekey,HallLinas}. 
This is due to discrepancies in the nature of data, in particular in the sampling. The observer often cannot control the instants of measure and makes regular measures with each of the available sensors. In this case, controlling the delays between measurements with different sensors can lead to a consequent gain in the quality of the estimator \cite{Bourrier}. One may also ask: what is the optimal 
timetable of measurements when the devices are of different quality? This problem is explored in several recent papers. 

\subsection{Previous work}

Different models of the observed process and of the sensors as well as different optimization criteria have been explored. 

Models of the observed process of infinite duration have been consiedered \cite{IFAC,Bourrier}. In  this case, the mean covariance of the estimator over a long period of observation is minimized. In other terms, the optimization criterion only takes the steady-state performance of a periodic schedule into account. A model in contiuous time is explored in \cite{Bourrier}, while the time is discretized in the model of \cite{IFAC}. Another notable difference between the two models lies in that a measure is performed at every moment of the discrete time in the text \cite{IFAC}. 
As opposed to optimizing the steady-state performance \cite{IFAC,Bourrier},
local optimization is performed in the setting considered in \cite{Orihuela}. The resulting schedule is proved to be ultimately periodic, which is an a priori assumption in \cite{IFAC,Bourrier}.

When the process has a finite duration, the steady-state is not achieved (e.g., \cite{pinguins}). 
Optimizing the performance over a finite time interval is to be considered \cite{HerringMelsa,WarringtonDhir}. The optimal periodic schedule in a model with discrete time is sought in \cite{WarringtonDhir} with respect to the performance over a finite time interval. It is supposed that the interval is long enough with respect to the measurement period. 
 No additional assumptions regarding the number of measurements or the duration of the process (which is supposed to be finite) are made in the seminal work \cite{HerringMelsa}. A model, where sensors are active during an interval of time, is considered. The length of the interval of activation is a result of a tradeoff between the quality of estimation and the cost (per unit time) of using a measurement device. The optimal solution is  given in the form of an optiization problem in \cite{HerringMelsa}.

\subsection{Contributions of the paper.}

A model of observation of  a \textit{scalar} continuous latent variable on a \textit{finite} interval of time with noisy sensors is considered. Each sensor has an access to \textit{only one} measurement at one time instant. The process evolves in \textit{continuous} time in the considered model. Measurement noises of all sensors are independent random variables. 
The quality of estimation is evaluated according to the mean variance of the estimator over time.  The model studied here is simpler than that of \cite{HerringMelsa} (because the measures are instantaneous), which allows to study its properties in bigger detail. 
A qualitative study of the optimal instants of measure reveals different behaviors (``regimes") depending on the parameters. Analytic formulas for different regimes are given in the present paper and proved in the Technical Report \cite{TechnicalReport}. The optimal instant of measure is given by an analytic formula in case of one measure. In the case of two measures, an iterative algorithm and a formula in the form of a solution of a system of two equations are given.

The main theoretical results of this paper are the optimal instants of measures in the case of one or two measures (see Proposition \ref{propT1opt} and Theorem \ref{thT1T2optMonotoneCont}).
These results are illustrated by numerical computation of the optimal schedules when $2$ measures are available, the values of parameters being fixed or random.

The paper is organized as follows.  The general (multimodal, irregularly scheduled) Kalman estimation model and the cost function are defined in Section \ref{secModel}. The particular case, where the instant of only one measure is variable, has been studied in the authors' previous work \cite{art-LVAICA}.  The results of \cite{art-LVAICA} are recalled and completed in Section \ref{sec1meas}. 
The particular case, where the instants of two measures are variable, is studied in Section \ref{sec2meas}.

\section{Model Description and Optimization Objective.}
\label{secModel}
\subsection{The model of scalar Brownian Motion.}
We assume that the estimation of the system state is done by computing the time evolution of a parameter, and that the variance of the estimation grows linearly between measurements. 
This simple assumption models the fact that decreasing the measure frequency decreases the accuracy on the system state estimation.  In this purpose, we consider a real Brownian motion \(\theta(t)\) (\(t\in[0,T]\)), satisfying for $t{>}s$, \(\theta(t)-\theta(s)\simDNormale(0,\sigma^2(t-s))\)
i.e., the increments are Gaussian with mean $0$ and variance \(\sigma^2(t-s)\). 

Suppose \(n\) sensors can make measurements at moments \(t_1,\dots,t_n (0\leqslant t_1\leqslant \dots\leqslant t_n\leqslant T)\). It is assumed that each sensor \(k\) returns a measured value equal to \(X_k\) at time \(t_k\).  
No subsequence of the sequence $(t_1,\dots,t_n)$ is constrained to be regular in any sense.

Kalman filtering is used fr estimating the state $\theta(t)$ of the system using the results of the measures preceding $t$.
Suppose, the initial state $\theta(0)$ is a Gaussian random variable of mean \(\bar{\theta}_0\) and variance \(v_0\). 
Suppose that \(\theta(0)\), the measurement noise and the evolution of the Brownian motion \(\theta(t)\) are independent. 
The Kalman filter framework can apply with the state and measurement equations:
\begin{align}
\theta(t_k)&=\theta(t_{k-1})+w_k, \ w_k\simDNormale(0,\sigma^2(t_k-t_{k-1}))\label{stateKalman}\\
X_k&=\theta(t_k)+n_k, \ n_k\simDNormale(0,v_k).\label{measureKalman}
\end{align}
By the theory of Kalman filtering (see \cite{Jazwinski}), the maximum likelihood estimate \(\hat{\theta}^{t_k}_{t_k}\) of \(\theta(t_k)\) and its variance \(\Gamma^{t_k}_{t_k}\) are defined by the following recursive equations:
\begin{empheq}[left=\empheqlbrace]{align}
\hat{\theta}^{t_k}_{t_k}&=\hat{\theta}^{t_{k-1}}_{t_k}+K(t_k)\left(X_k-\hat{\theta}^{t_{k-1}}_{t_{k-1}} \right) \label{KalmanFilteringEq}\\
\hat{\theta}^{t_{k-1}}_{t_k}&=\hat{\theta}^{t_{k-1}}_{t_{k-1}} \label{KalmanPropagationEq}\\
\Gamma^{t_k}_{t_k}&=\Gamma^{t_{k-1}}_{t_k}- K(t_k) \Gamma^{t_{k-1}}_{t_k} \label{KalmanVarParallel}\\
K(t_k)&=\Gamma^{t_{k-1}}_{t_k}\left(\Gamma^{t_{k-1}}_{t_k}+v_k\right)^{-1} \label{KalmanGain}\\
\Gamma^{t_{k-1}}_{t_k}&=\Gamma^{t_{k-1}}_{t_{k-1}}+\sigma^2(t_k-t_{k-1}), \label{KalmanVarSeries}
\end{empheq}
where \(\hat{\theta}^{t_l}_{t_k}\) \((l\in \{k-1,k\})\) is the maximum likelihood estimate of \(\theta(t_k)\) conditionally to the data available at time \(t_l\), and  \(\Gamma^{t_l}_{t_k}\) is the variance of the estimate \(\hat{\theta}^{t_l}_{t_k}\). \(K(t_k)\) is the Kalman gain used for the update at time \(t_k\).
In order for \eqref{KalmanVarSeries} to make sense for \(k=1\), define \(t_0=0\) and \(\Gamma^{t_0}_{t_0}=v_0\).

Remark that, by \eqref{KalmanVarParallel},\eqref{KalmanGain}, using the fact that all quantities are scalar,
\begin{equation}
\label{RecOverlineVk0}
\Gamma^{t_k}_{t_k}=
\Gamma^{t_{k-1}}_{t_k}-\frac{\left(\Gamma^{t_{k-1}}_{t_k}\right)^2}{\Gamma^{t_{k-1}}_{t_k}+v_k}
=\frac{v_k\Gamma^{t_{k-1}}_{t_k}}{v_k+\Gamma^{t_{k-1}}_{t_k}},
\end{equation}
which is equivalent (by \eqref{KalmanVarSeries}) to
\begin{multline}
\label{RecOverlineVk}
\left(\Gamma^{t_k}_{t_k}\right)^{-1}{=}v_k^{-1}{+}\left(\Gamma^{t_{k-1}}_{t_k}\right)^{-1}=\\
v_k^{-1}{+}\left(\Gamma^{t_{k-1}}_{t_{k-1}}{+}\sigma^2(t_k{-}t_{k-1})\right)^{-1}.
\end{multline}
Therefore, each \(\Gamma^{t_k}_{t_k}\) is a rational function of \(\sigma^2,t_1,\dots,t_k,v_0,\dots,v_k\).

For each \(t\in[0,T]\), denote \(v(t)\) the variance of \(\hat{\theta}(t)\), i.e. the variance when the last measurement was taken plus the uncertainty due to the time without new feedbacks. It equals:
\begin{equation}
\label{defV}
v(t)=\Gamma^{t_k}_{t_k}+\sigma^2(t-t_k)\text{ where }k=\max\{i|t_i\leqslant t\}.
\end{equation}
\(v(t)\) is a piecewise linear function composed of line intervals of slope \(\sigma^2\). Two examples of functions \(v(t)\) are shown Figure \ref{FigvtExample2}.

\begin{figure}
\begin{center}
\begin{tabular}{c@{\hspace{1cm}}c}
\begin{tikzpicture}[scale=2]
\draw[->] (-0.2,0) -- (1.200000,0);
\draw[->] (0,-0.2) -- (0,1) node[left]{\(v(t)\)};
\draw (0.000000,1.000000) -- (0.000001,1.000001);
\draw (0.000001,1.000001) -- (0.000001,0.500000);
\draw (0.000001,0.500000) -- (0.127483,0.627482);
\draw (0.127483,0.627482) -- (0.127483,0.385554);
\draw (0.127483,0.385554) -- (0.369409,0.627480);
\draw (0.369409,0.627480) -- (0.369409,0.385553);
\draw (0.369409,0.385553) -- (0.611335,0.627480);
\draw (0.611335,0.627480) -- (0.611335,0.385553);
\draw (0.611335,0.385553) -- (1.000000,0.774218);
\draw[dashed] (0.127483,0) node[below]{$t_1$} -- (0.127483,0.385554);
\draw[dashed] (0.369409,0) node[below]{$t_2$} -- (0.369409,0.385553);
\draw[dashed] (0.611335,0) node[below]{$t_3$} -- (0.611335,0.385553);
\draw[dashed] (1,0) node[below]{$T$} -- (1,0.774218);
\end{tikzpicture}&
\begin{tikzpicture}[scale=2]
\draw[->] (-0.2,0) -- (1.200000,0);
\draw[->] (0,-0.2) -- (0,1) node[left]{\(v(t)\)};
\draw (0,0.500000) -- (0.2405,0.2405+0.5);
\draw (0.2405,0.2405+0.5) -- (0.2405,0.4255);
\draw (0.2405,0.4255) -- (0.4972,0.4255+0.4972-0.2405);
\draw (0.4972,0.4255+0.4972-0.2405) -- (0.4972,0.507);
\draw (0.4972,0.507) -- (0.641,0.507+0.641-0.4972);
\draw (0.641,0.507+0.641-0.4972) -- (0.641,0.5368);
\draw (0.641,0.5368) -- (1,0.5368+1-0.641);
\draw[dashed] (0.2405,0) node[below]{$t_1$} -- (0.2405,0.4255);
\draw[dashed] (0.4972,0) node[below]{$t_2$} -- (0.4972,0.507);
\draw[dashed] (0.641,0) node[below]{$t_3$} -- (0.641,0.5368);
\draw[dashed] (1,0) node[below]{$T$} -- (1,0.5368+1-0.641);
\end{tikzpicture}\\
\textbf{(a)}&\textbf{(b)}
\end{tabular}
\end{center}
\caption{The function \(v(t)\) in particular cases. In \textbf{(a)}, \(v_0{=}\frac12,v_1{=}v_2{=}v_3{=}1, T=1, \sigma^2=1\)  and \(t_1{=}0.128,t_2{=}0.369,t_3{=}0.611\). In \textbf{(b)}, \(v_0{=}\frac12,v_1{=}1, v_2{=}2, v_3{=}3, T=1, \sigma^2=1\)  and \(t_1{=}0.241,t_2{=}0.494,t_3{=}0.641\). The values of $v_1,v_2,v_3$ control the differences of the variance before and after the measurement. In the first example, \(v_1,v_2,v_3\) are equal, in the second example they are different.}
\label{FigvtExample2}
\end{figure}
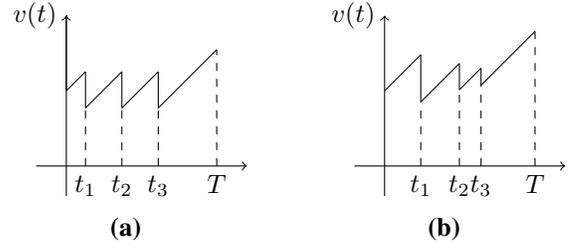

\subsection{Notation.}
Throughout this paper, the notation $(a\parallelsum b)$ will stand for $\frac{ab}{a+b}$. Note that this notation allows to rewrite \eqref{RecOverlineVk0} in a more compact way:
\begin{equation}
\label{RecOverlineVk0Compact}
\Gamma^{t_k}_{t_k}=v_k\parallelsum \Gamma^{t_{k-1}}_{t_k}.
\end{equation}

The notation $v_{k,\dots,l}$ (where $0\leqslant k\leqslant l\leqslant n$)
will stand for $(v_k\parallelsum v_{k+1}\parallelsum\dots\parallelsum v_l)$. If $k=0$, $v_{0,1,\dots,l}$ is the variance of the Kalman estimator of $\theta(0)$, which uses the information of sensors $1,\dots,l$ supposing that these sensors are activated at the instant $0$. $v_{0,1,\dots,l}$ is the smallest possible value of $\Vt{l}$. If $k>0$, $v_{k,\dots,l}$ is the error variance of the equivalent device obtained by activating the devices number $k,k+1,\dots,l$ simultaneously.

\subsection{The Optimization Criterion, General Results and Notations.}
\label{subsecCostMean}

The following optimization criterion is chosen in this article: the mean of the variance \(v(t)\) of the maximum likelihood estimator of \(\theta(t)\) is minimized by choosing the measurement instants \(t_1,\dots,t_n\). 
This implies that the following cost function is to be minimized under the constraint \(0{\leqslant} t_1{\leqslant} t_2{\leqslant}\dots{\leqslant} t_n{\leqslant} T \):
\begin{multline}
\label{defCost}
J_{\Sig,T,v_0,v_1,\dots,v_n}(t_1,\dots,t_n)= \int_0^T v(t)dt=\\
\frac{\Sig t_1^2}2+v_0t_1+
\frac{\Sig(t_2-t_1)^2}2+\Vt{1}(t_2-t_1)+\dots+\\
\frac{\Sig(T-t_n)^2}2+\Vt{n}(T-t_n).
\end{multline}
One can remark that the cost function \eqref{defCost} is rational in its \(2n{+}3\) parameters \(\Sig,T,v_0,\dots,v_n,t_1,\dots,t_n\).

If this function is minimized in a unique point 
\begin{equation}
\label{DefTkopt}
\left(t^{(n)}_{1,\text{opt}}(\Sig,T,v_0,\dots,v_n),\dots,t^{(n)}_{n,\text{opt}}(\Sig,T,v_0,\dots,v_n)\right),
\end{equation} 
these values are the optimal measurement instants. We can wonder where these instants are located, and especially if some of them are equal to zero. The minimizer is indeed unique in the cases \(n=1,2\), which is proved in Subsection \ref{ssecQuantitativeT1opt} and \ref{subsecTlargerT1crit} below.

We are also interested in the behavior of the optimal measurement times as functions of \(T\): monotonicity, asymptotic, etc.

\section{The optimal instant of one measure.}
\label{sec1meas}
\subsection{Overview of the Problem and Results.}
\label{subsecOptimalT1}

In this Section, the above problem is studied for the particular case where \(n=1\)  measure can be performed. All questions listed above are solved in terms of explicit formulas in Section \ref{ssecQuantitativeT1opt}.  Solving this particular case is necessary for tackling more complex problems. Multimodality is of smaller importance in this case, than in the more involved cases of \(n=2\) measures and \(n>2\) measures.

The cost function \eqref{defCost} takes the form
\begin{multline}
\label{fleCostOneMeasure}
J_{\Sig,T,v_0,v_1}(t_1)=\frac{\Sig t_1^2}{2}+v_0t_1+\frac{\Sig(T-t_1)^2}{2}+\\ \frac{(\Sig t_1+v_0)v_1(T-t_1)}{\Sig t_1+v_0+v_1}.
\end{multline}
Its behavior is shown Figure \ref{FigtCostMesUnique}, \textbf{(a)}. 
Remark that the RHS term in equation \eqref{fleCostOneMeasure} can be split into two terms:  the "rectangular term" \(\left(v_0t_1+\frac{(\Sig t_1+v_0)v_1(T-t_1)}{\Sig t_1+v_0+v_1}\right)\) and the "triangular term" \((\frac{\Sig t_1^2}{2}+\frac{\Sig(T-t_1)^2}{2})\), respectively accounting for the contributions of the rectangular and triangular shaped area in the integral of $v(t)$, and shown on Figure \ref{FigtCostMesUnique}, \textbf{(b)}. 
Minimizing the cost function \(J_{T,v_0,v_1}(t_1)\) constitutes a tradeoff between minimizing these two terms.

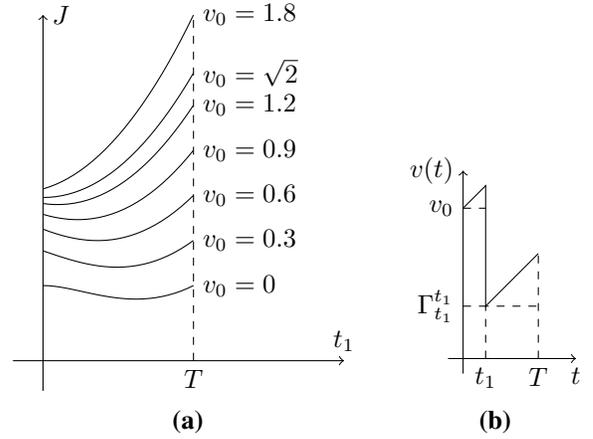
\begin{figure}
\begin{center}
\begin{tabular}{c@{\hspace{0.5cm}}c}
\begin{tikzpicture}[scale=2]
\draw[->] (-0.2,0)--(2,0) node[above]{$t_1$};
\draw[->] (0,-0.2)--(0,2.3) node[right]{$J$};
\draw[dashed] (1,0) node[below]{$T$} -- (1,2.3);

\draw[domain=0:1,variable=\x] plot ({\x},{\x^2/2 + (1-\x)^2/2 + (\x)*(1-\x)/(\x+1)});
\draw (1,0.5) node[right]{$v_0=0$};
\draw[domain=0:1,variable=\x] plot ({\x},{\x^2/2+ 0.3*\x + (1-\x)^2/2 + (\x+0.3)*(1-\x)/(\x+1.3)});
\draw (1,0.8) node[right]{$v_0=0.3$};
\draw[domain=0:1,variable=\x] plot ({\x},{\x^2/2+ 0.6*\x + (1-\x)^2/2 + (\x+0.6)*(1-\x)/(\x+1.6)});
\draw (1,1.1) node[right]{$v_0=0.6$};
\draw[domain=0:1,variable=\x] plot ({\x},{\x^2/2+ 0.9*\x + (1-\x)^2/2 + (\x+0.9)*(1-\x)/(\x+1.9)});
\draw (1,1.4) node[right]{$v_0=0.9$};
\draw[domain=0:1,variable=\x] plot ({\x},{\x^2/2+ 1.2*\x + (1-\x)^2/2 + (\x+1.2)*(1-\x)/(\x+2.2)});
\draw (1,1.7) node[right]{$v_0=1.2$};
\draw[domain=0:1,variable=\x] plot ({\x},{\x^2/2+ 1.4142*\x + (1-\x)^2/2 + (\x+1.4142)*(1-\x)/(\x+2.4142)});
\draw (1,1.9142) node[right]{$v_0=\sqrt{2}$};
\draw[domain=0:1,variable=\x] plot ({\x},{\x^2/2+ 1.8*\x + (1-\x)^2/2 + (\x+1.8)*(1-\x)/(\x+2.8)});
\draw (1,2.3) node[right]{$v_0=1.8$};
\end{tikzpicture}&
\begin{tikzpicture}
\draw[->] (-0.2,0) -- (1.5,0) node[below]{$t$};
\draw[->] (0,-0.2) -- (0,2.5) node[left]{$v(t)$};
\draw[dashed] (0,2) -- (0.3,2);
\draw[dashed] (0.3,0) node[below]{$t_1$} -- (0.3,0.697) -- (1,0.697);
\draw[dashed] (1,0) node[below]{$T$} -- (1,1.397);
\draw[dashed] (0,0.697) node[left]{$\Vt{1}$} -- (0.3,0.697);
\draw (0,2) node[left]{$v_0$} -- (0.3,2.3) -- (0.3,0.697) -- (1,1.397);
\end{tikzpicture}
\\
\textbf{(a)}&\textbf{(b)}
\end{tabular}
\end{center}
\caption{\textbf{(a)}:$J_{\Sig,T,v_0,v_1}(t_1)$ as function of $v_0$ and $t_1$. The parameters are $v_1=1, T=1, \Sig=1$. The cost function is minimized at $t_1=0$ if and only if $v_0\geqslant\sqrt{2}$. \textbf{(b)}: An example of a function \(v(t)\) showing the geometric interpretation of the rectangular and the triangular terms of the expression \eqref{fleCostOneMeasure} of the integral cost function.
}
\label{FigtCostMesUnique}
\end{figure}

Different situations are possible as it can be seen on Figure \ref{FigtCostMesUnique}, \textbf{(a)}. One can define the \textbf{regime 1} as the set of situations when \(t_1{=}0\) is the optimum. Similarly, define the \textbf{regime 2} as the set of situations where the optimal \(t_1\) is in the interior of the interval \([0,T]\). Then, the optimal \(t_1\) is the point where the derivative of the cost function \eqref{fleCostOneMeasure} vanishes. Its value is given by \eqref{t1optfonctiondeT}. Remark that in the regime 2, the optimal \(t_1\) can be larger than \(\frac{T}2\).

The optimal instant of measure is given by the following statement.
\begin{prop}
\label{propT1opt}
Let the parameters $\Sig>0, T>0, v_0\geqslant 0, v_1\geqslant 0$ be fixed. The optimal instant of measure is
\begin{multline}
\label{t1optGen}
t^{(1)}_\text{\rm opt}(\Sig,T,v_0,v_1)=\argmin_{t_1} J_{\Sig,T,v_0,v_1}(t_1)=\\\max\left(0,\frac{-3v_0{-}3v_1{+}\Sig T{+}\sqrt{(\Sig T{+}v_0{+}5v_1)^2{-}(4v_1)^2}}{4\Sig}\right).
\end{multline}
\end{prop}
Here, the general notation $t^{(1)}_\text{1,opt}$ \eqref{DefTkopt} is simplified by dropping the unnecessary index $1$.

Proposition \ref{propT1opt} is proved in Subsection \ref{ssecQuantitativeT1opt}.

\subsection{Derivation of Proposition \ref{propT1opt} and Properties of the optimal instant of measure.}
\label{ssecQuantitativeT1opt}

The behavior of the cost function can be studied using its partial derivative:
\begin{multline}
\label{DCostDt1UneMesuretprime}
\pd{J_{\Sig,T,v_0,v_1}(t_1)}{t_1}=\frac{v_0+\Sig t_1}{v_0+v_1+\Sig t_1}\times\\ \left(v_0+\Sig t_1-\Sig (T-t_1)\left(\frac{v_1}{v_0+v_1+\Sig t_1}+1\right)\right).
\end{multline}

Remark that the RHS of \eqref{DCostDt1UneMesuretprime} is a product of two increasing (with respect to \(t_1\)) factors, the first of which \(\left(\frac{v_0+\Sig t_1}{v_0+v_1+\Sig t_1}\right)\) is nonnegative (this factor vanishes iff $v_0$ and $t_1=0$). In addition, this derivative is positive in the point $t_1=T$ Therefore, the locus of positivity of \(\pd{J_T(t_1)}{t_1}\)  is an interval of the form \(]t^{(1)}_{\text{opt}},T]\), where \(t^{(1)}_{\text{opt}}\) may equal zero or be strictly positive.  
Consequently, two different behaviors of the cost function are possible. In the first case (regime $1$), it is increasing near $t_1=0$. 
Then, the cost function $J_{\Sig,T,v_0,v_1}(t_1)$ is increasing and convex on the whole interval $[0,T]$, and its global minimum is $t^{(1)}_{\text{opt}}(T)=0$. According to \eqref{DCostDt1UneMesuretprime}, this corresponds to
\begin{equation}
\label{TcritUneMesure}
T\leqslant T^{(1)}_\text{crit}(\Sig,v_0,v_1)=\frac{v_0}{\Sig\left(\frac{v_1}{v_0+v_1}+1\right)}.
\end{equation}

In the second case (regime 2), the cost function is decreasing near \(t_1=0\). This is observed when \eqref{TcritUneMesure} does not hold, i.e. \(T\) is large or \(v_0\) is small. Then, the minimum of the cost function is reached at the only nonzero point \(t^{(1)}_\text{opt}\), where its derivative \eqref{DCostDt1UneMesuretprime} equals zero. By equating the derivative \eqref{DCostDt1UneMesuretprime} to zero, one gets the following expression for \(t^{(1)}_\text{opt}\)
\begin{equation}
\label{t1optfonctiondeT}
t^{(1)}_\text{opt}=\frac{-3v_0-3v_1+\Sig T+\sqrt{(\Sig T+v_0+5v_1)^2-(4v_1)^2}}{4\Sig}.\end{equation}
Remark that the duration $T$ can be expressed from $\Sig,t^{(1)}_\text{opt},v_0,v_1$ in this case as a rational function:
\begin{multline}
\label{TFonctiondet1opt}
T=2t^{(1)}_\text{opt}+\frac{v_0-v_1}\Sig+\frac{2v_1^2}{\Sig(v_0+\Sig t^{(1)}_\text{opt}+2v_1)}\\
 =T_{t_1}(\Sig,t^{(1)}_\text{opt},v_0,v_1).
\end{multline}

Using \eqref{TcritUneMesure} and \eqref{t1optfonctiondeT}, it is easy to check that
\begin{multline*}
t^{(1)}_\text{opt}(T^{(1)}_\text{crit})=\\\frac{-3v_0-3v_1+\Sig T^{(1)}_\text{crit}+\sqrt{(\Sig T^{(1)}_\text{crit}+v_0+5v_1)^2-(4v_1)^2}}{4\Sig}\\=0,
\end{multline*}
i.e., both formulas of regime $1$ and regime $2$ coincide if the values of the parameters lie on the boundary. This proves Proposition \ref{propT1opt}.

Remark that $T^{(1)}_\text{crit}$ is an increasing function of \(v_0\) and a decreasing function of \(v_1\) and of \(\Sig\). The limit cases of \eqref{TcritUneMesure} have the following intuitive interpretations. If $v_0\ll v_1$ (the observer has a precise knowledge about the state of the system at the instant $0$), $T^{(1)}_\text{crit}=0$, therefore the next measure should not be done in the same time. If $v_1\gg v_0$ (the measure is very inexact), the measure should be scheduled for a moment different from zero if $T\geqslant\frac{v_0}{2\Sig}$. On the other hand, if  $v_1\ll v_0$ (there is a possibility to gain precise knowledge about the system at an instant the observer can choose), then the measure should be done as soon as possible if $T\leqslant\frac{v_0}{\Sig}$.

Intuitively, ``regime $1$" is observed when \(T\) is small or \(v_0\) is large, which means that the prior information, that  the observer gets for free, is poor. In this case, it is penalizing not to take a measure immediately in order to get better information. More formally, the 
rectangular term has an order of magnitude \(O(T)\) when \(T\) tends to zero, while the triangular term has an order of magnitude \(O(T^2)\). Therefore, when \(T\) is small enough, choosing \(t_1{=}0\) should minimize both the rectangular term and the sum.

The following Proposition resumes some qualitative properties of the optimal instant of measure.
\begin{prop}
\label{propAsymptoticsT1opt}
The function $t^{(1)}_\text{\rm opt}$ is differentiable everywhere except at the border between \textbf{regime 1} and \textbf{regime 2}.
\(t^{(1)}_\text{\rm opt}(\Sig,T,v_0,v_1)\) is increasing as a function of $T$ (constant on the interval $T\in]0,T_\text{\rm crit}]$), decreasing as a function of $v_0$ and increasing as a function of $v_1$. On the interval  $T\in[T_\text{\rm crit},+\infty[$ it is a concave and strictly increasing function of $T$. Its asymptotic expansion is
\begin{equation}
\label{t1optfonctiondeTasymptote}
t^{(1)}_\text{\rm opt}(T)=\frac{\Sig T+v_1-v_0}{2\Sig}+o\left(\frac1{T}\right),
\end{equation}
the function being always smaller than its asymptote:
\begin{equation}
\label{t1optfonctiondeTIneqasymptote}
t^{(1)}_\text{\rm opt}(T)<\frac{\Sig T+v_1-v_0}{2\Sig}.
\end{equation}

When $v_1$ is large, one gets the limit:
\begin{equation}
\label{t1optv1infini}
\lim_{v_1\to\infty} t^{(1)}_\text{\rm opt}(\Sig,T,v_0,v_1)= \max\left(0,\frac{2\Sig T-v_0}{3\Sig}\right).
\end{equation}
\end{prop}
Proposition \ref{propAsymptoticsT1opt} is proved in Technical Report \cite{TechnicalReport}.

The following intuitive argument can be given for the order of magnitude of the optimal instant: \(t^{(1)}_\text{opt}(T)\sim \frac{T}2\) (by \eqref{t1optfonctiondeTasymptote}). When, \(T\) is large, the triangular term becomes more important than the ``rectangular term". Therefore, the minimum of the sum should be close to the value \(\frac{T}2\), which minimizes the triangular term.

Remark that the dependence of \(t_{1,\text{opt}}\) in \(\Sig\) and \(T\) is simplified by the relation
\begin{equation}
\label{EquivarSigmaCarre}  
t^{(1)}_\text{opt}(\frac\Sig\alpha,\alpha T,v_0,v_1)=\alpha t^{(1)}_\text{opt}(\Sig,T,v_0,v_1),
\end{equation}
therefore, the ratio \(t^{(1)}_\text{opt}/T\) depends only on \(\Sig T,v_0\) and \(v_1\).

\subsection{Bounds on the Cost Function.}
\label{ssecBounds}

One may ask for easy-to-compute  lower and upper bounds $\underline{J}$ and $\bar{J}$ of the cost function \(J\), which are independent of the instant of measure.  The value reached without measuring in the interval (which is equivalent to measuring at \(t_1=T\)) is a trivial upper bound: 
\begin{equation}
\label{upperboundCost}
J_{T,v_0,v_1}(t_1)= \int_0^T v(t)dt \leqslant v_0T+\Sig\frac{T^2}2=\bar{J}(T,v_0,v_1).
\end{equation}

A lower bound is suggested by the article \cite{Bourrier}. It leads to formulating the following. 
\begin{theorem}
\label{thresourceLBound}
The cumulative variance of a Kalman filter is bounded below by the quantity given by \eqref{resourseMinorationMeanVar}, which is independent of the instant of measure \(t_1\):
\begin{equation}
\label{resourseMinorationMeanVar}
J_{T,v_0,v_1}(t_1)>\sqrt{\Sig v_{0,1}T^3}=\underline{J}(T,v_0,v_1).
\end{equation}
\end{theorem}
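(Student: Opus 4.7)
The plan is to lower bound the integrand $v(t)$ pointwise by $v_{0,1}:=v_0\parallelsum v_1$ plus the Brownian drift accumulated since the most recent ``reset,'' then integrate and reduce the claim to two elementary inequalities. The crucial observation is that the map $x\mapsto x\parallelsum v_1$ is strictly increasing on $[0,+\infty[$, so on the one hand $v_0 \geq v_0\parallelsum v_1 = v_{0,1}$, and on the other hand $\Vt{1}=(v_0+\Sig t_1)\parallelsum v_1 \geq v_0\parallelsum v_1 = v_{0,1}$. Combined with \eqref{defVtUneMesure}, this yields $v(t)\geq v_{0,1}+\Sig t$ for $t\in[0,t_1]$ and $v(t)\geq v_{0,1}+\Sig(t-t_1)$ for $t\in[t_1,T]$.

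Integrating these two pointwise bounds and summing produces
\begin{equation*}
J_{\Sig,T,v_0,v_1}(t_1)\geq v_{0,1}T+\frac{\Sig}{2}\bigl(t_1^2+(T-t_1)^2\bigr).
\end{equation*}
Two elementary steps then finish the proof. First, the convex inequality $t_1^2+(T-t_1)^2\geq T^2/2$ gives $J\geq v_{0,1}T+\Sig T^2/4$. Second, AM-GM applied to the two summands yields
\begin{equation*}
v_{0,1}T+\frac{\Sig T^2}{4}\geq 2\sqrt{v_{0,1}T\cdot\tfrac{\Sig T^2}{4}}=T\sqrt{\Sig v_{0,1}T},
\end{equation*}
which is the claimed bound up to strictness. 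Notably, at no point do we need to know the optimal measurement instant, which is what makes the bound independent of $t_1$.

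The main subtlety I anticipate is promoting the chain to the strict inequality of \eqref{resourseMinorationMeanVar}. If $t_1>0$, then $v_0>v_{0,1}$ strictly whenever $v_0,v_1\in\,]0,+\infty[$, so the pointwise bound on $[0,t_1]$ is strict after integration and the whole chain is strict. If $t_1=0$, direct substitution gives $J(0)=v_{0,1}T+\Sig T^2/2$, and AM-GM with the better constant $\Sig T^2/2$ in place of $\Sig T^2/4$ yields $J(0)\geq T\sqrt{2\Sig v_{0,1}T}>T\sqrt{\Sig v_{0,1}T}$, again strict. The boundary limits $v_0\to0$, $v_1\to0$, $v_1\to+\infty$ all make $v_{0,1}$ either vanish or reduce to $v_0$ and can be checked by direct substitution; these are the only cases requiring separate attention.
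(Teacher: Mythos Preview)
Your argument is correct and substantially more elegant than the paper's. The paper handles the case $t_1=t^{(1)}_{\text{opt}}=0$ essentially as you do (halving one summand before AM--GM instead of carrying the $\sqrt{2}$ factor), but for $t^{(1)}_{\text{opt}}>0$ it substitutes the rational expression \eqref{TFonctiondet1opt} for $T$ as a function of $t_1$, squares both sides, and verifies \eqref{resourseMinorationSquaresDifference1} by a brute-force symbolic expansion of $J^2-\Sig v_{0,1}T^3$ into a polynomial with manifestly positive coefficients. Your approach sidesteps this entirely: the pointwise bound $v(t)\geq v_{0,1}+\Sig(t-t_k)$ on each subinterval (from $v_0\geq v_{0,1}$ and the monotonicity of $x\mapsto x\parallelsum v_1$) integrates cleanly, and the convexity bound $t_1^2+(T-t_1)^2\geq T^2/2$ followed by AM--GM finishes it. Your proof also works directly for every $t_1\in[0,T]$, whereas the paper first reduces to the optimizer; and it makes the strict inequality transparent ($v_0>v_{0,1}$ gives strictness on $[0,t_1]$ whenever $t_1>0$, and the $t_1=0$ case is handled separately). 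The only thing the paper's computation buys is an explicit positive lower bound on the gap $J^2-\Sig v_{0,1}T^3$, which is not needed for the statement as given.
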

Theorem \ref{thresourceLBound} is proved in Technical Report \cite{TechnicalReport}.

Two numerical experiments have been performed in order to compare the cost achieved by measuring at the optimal instant with the cost achieved by using an intuitive strategy, and with the lower bound $\underline{J}$.  Their results are shown Figure \ref{figComparisonRegular}.

In the first experiment (Figure \textbf{(a)}), the costs achieved by measuring at the optimal instant 
have been computed and plotted together with the costs achieved by the intuitive strategies of measuring at $0$ or at $\frac{T}{2}$, and with the corresponding values of the lower bound $\underline{J}$. The values \(T=1, \Sig=1, v_1=1\) and $v_0$ varying from $0$ to $2$ have been used for the parameters.

In the second experiment (Figure \textbf{(b)}), the costs $J_{\text{opt}}$ achieved by measuring at the optimal instant have been computed together with the costs $J_{\text{reg}}$ achieved by measuring at $\frac{T}{2}$. The values \(T=1, \Sig=1\) and $v_0,v_1$ varying from $0$ to $5$ have been used for the parameters. Figure \ref{figComparisonRegular}\textbf{(b)} shows a contour plot of the gain $\frac{J_\text{reg}-J_\text{opt}}{J_\text{reg}}$ as function of $v_0,v_1$.

Figure \ref{figComparisonRegular}\textbf{(a)} shows that measuring at the best instant among $0$ and $\frac{T}2$ leads to a performance close to the optimal. Finding the correct "regime`` is more important, therefore, than computing the optimal instant with high precision. The contour plot Figure \ref{figComparisonRegular}\textbf{(b)} shows that for parameters $v_0,v_1$ in the considered range, the gain can reach $81\%$.

\begin{figure*}
\begin{center}
\begin{tabular}{cc}
\begin{tikzpicture}
\begin{axis}[
 axis x line=center,
 axis y line=center,
xlabel=$v_0$,
xmin=0,
xmax=2.3,
ylabel=\begin{tabular}{c}\textcolor{green}{\(J_{T,v_0,v_1}(0)\)}\\%
\textcolor{blue}{\(J_{T,v_0,v_1}(\frac{T}2)\)}\\%
\textcolor{red}{\(J_{T,v_0,v_1}(t^{(1)}_\text{opt})\)}\\%
\(\underline{J}(v_0)\)
\end{tabular}
]
\input{CostsExT0.tex}
\input{CostsExToptimal.tex}
\input{CostsExTmoitie.tex}
\input{ResApprox.tex}
\end{axis}
\end{tikzpicture}&
\includegraphics[width=7cm]{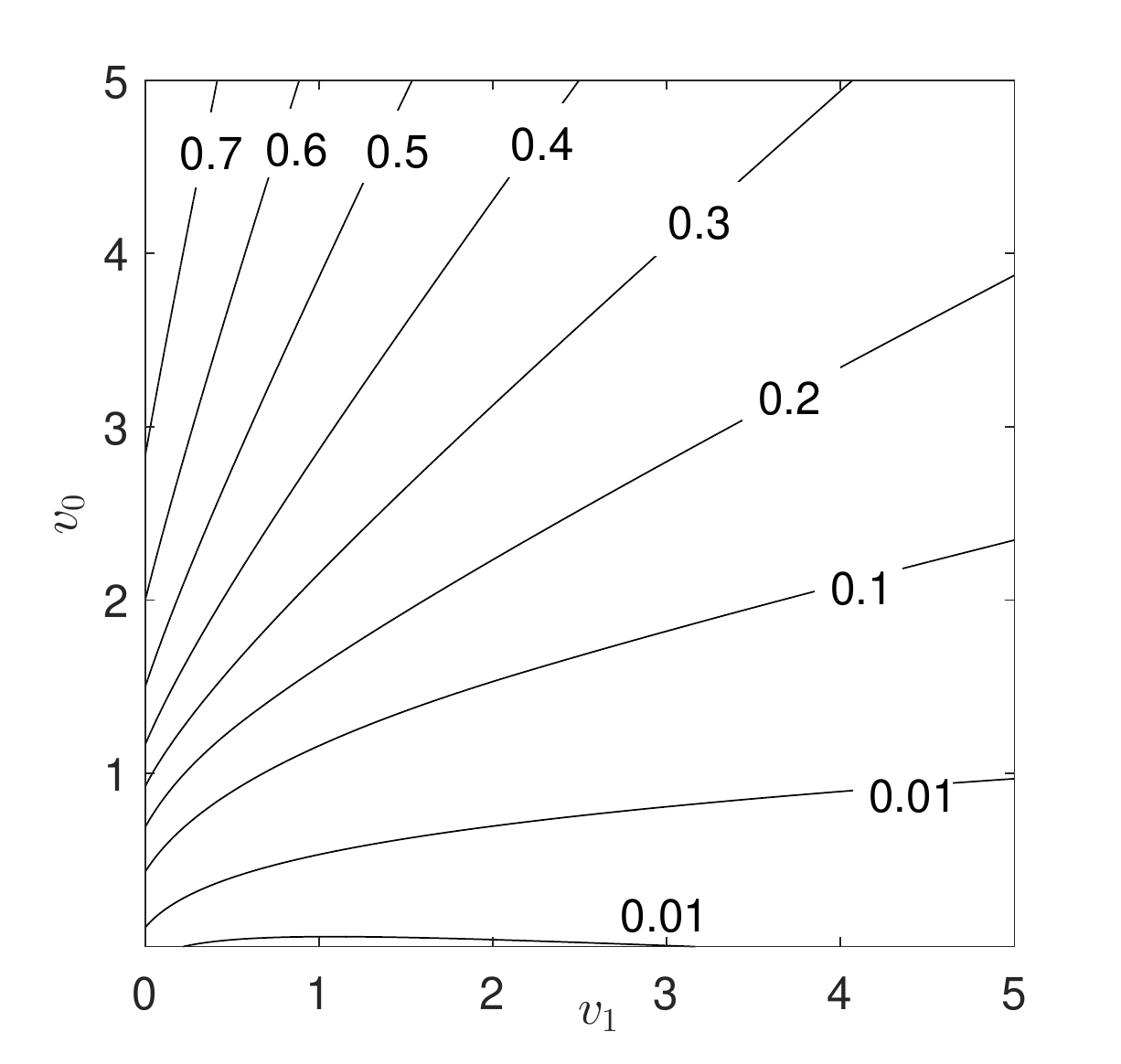}\\
\textbf{(a)}&\textbf{(b)}
\end{tabular}
\end{center}
\caption{\textbf{(a)} The costs for different choices of the instant of measurement \(t_1\) compared with the lower bound \(\underline{J}\) (see its definition \eqref{resourseMinorationMeanVar}). The parameters equal: \(T=1, \Sig=1, v_1=1\). \textbf{(b)} The contour plot of the gain of measuring at the optimal instant compared to measuring at $\frac{T}{2}$. The gain is defined as \(\frac{J_\text{reg}-J_\text{opt}}{J_\text{reg}}\). The parameters equal: \(T=1, \Sig=1, v_0,v_1\in[0,5]\).}
\label{figComparisonRegular}
\end{figure*}

\subsection{Kalman filter  with one Measure per Window, where the Windows are Periodic}
\label{subsecIterated1meas}

If only one measure is possible during a finite time interval, the optimal instant for this measure has been determined. When a Brownian motion is observed over an infinite time, the following scheduling strategy can be established: measure at moments \(t_{1,1}\in[0,T],t_{1,2}\in[T,2T],\dots,t_{1,k}\in[(k-1)T,kT],\dots\), where \(t_{1,1}\) is chosen in order to minimize the mean variance over the interval \([0,T]\), then \(t_{1,2}\) is chosen in order to minimize the mean variance over the interval \([T,2T]\) provided that the value \(v(T)\) (which depends on \(t_{1,1}\)) is used as \(v_0\) (i.e., \(T\) is the left endpoint of the interval, and \(v(T)\) is the variance of the prior information about \(\theta(T)\)), etc.
 
The parameters are: \(T,v_1\) (the error variance of every measure) and \(v_0\) (the variance of the prior information about \(\theta(0)\)). The intervals \([0,T],[T,2T],\dots\) will be called ``windows". 

The main result of this section is Theorem \ref{thPeriodicFromAperiodic}: for \(k\) big enough, \(t_{1,k}=(k-1)T\), i.e. the measures are done at the left endpoints of the corresponding ``windows".

\begin{theorem}
\label{thPeriodicFromAperiodic}
In the setting described above, the sequence of measurement instants satisfies: \(t_{1,k}=(k-1)T\) for $k$ large enough. Therefore, it is ultimately periodic. 
\end{theorem}

Theorem \ref{thPeriodicFromAperiodic} is proved in Technical Report \cite{TechnicalReport}. Figure \ref{FigvtExampleMesureUnique4iter} illustrates this setting:
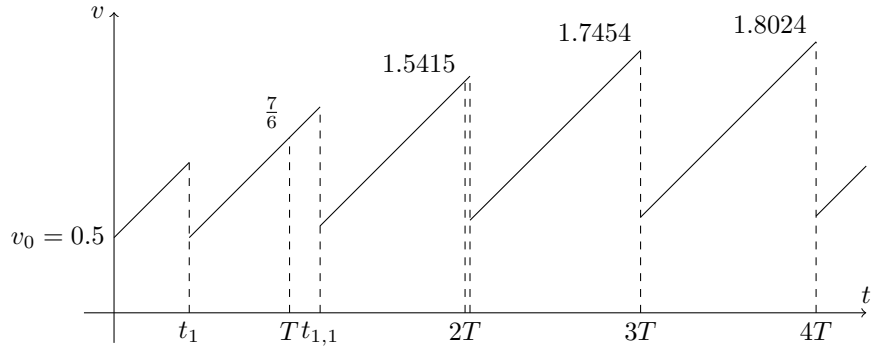
\begin{figure*}
\begin{center}
\begin{tikzpicture}[scale=2]
\draw[->] (-0.2,0) -- (5,0) node[above]{$t$};
\draw[->] (0,-0.2) -- (0,2) node[left]{$v$};
\draw (0,0.5) node[left]{$v_0=0.5$} -- (0.5,1);
\draw (0.5,0.5) -- (\vmuu,\vmuu);
\draw[dashed] (0.5,0) node[below]{$t_1$} -- (0.5,1);
\draw[dashed] (7/6,0) node[below]{$T$} -- (7/6,7/6) node[above left]{$\frac76$};
\draw[dashed] (7/6+\tuu,0) node[below]{$t_{1,1}$} -- (7/6+\tuu,\vmuu);

\draw (7/6+\tuu,\vuu) -- (7/3+\tud,\vmud);
\draw[dashed] (7/3,0) node[below]{$2T$} -- (7/3,\vod) node[above left]{$1.5415$};
\draw[dashed] (7/3+\tud,0) -- (7/3+\tud,\vmud);

\draw (7/3+\tud,\vud) -- (7/2,\vot);
\draw[dashed] (7/2,0) node[below]{$3T$} -- (7/2,\vot) node[above left]{$\vot$};

\draw (7/2,\vut) -- (14/3,\voq);
\draw[dashed] (14/3,0) node[below]{$4T$} -- (14/3,\voq) node[above left]{$\voq$};;

\draw (14/3,\vuq) -- (5,5-14/3+\vuq);

\end{tikzpicture}
\end{center}
\caption{$4$ iterations of the function \(\mathcal F_{T,v_1}\) from the initial value $v_0=\frac12$. The parameters are $v_1=1, \Sig=1, T=\frac76$. The values $v_0,v(T),v(2T),\dots$ are written on the figure.}
\label{FigvtExampleMesureUnique4iter}
\end{figure*}

One can remark that the result above resembles the results of  \cite{Orihuela}. In \cite{Orihuela}, the moments of measure are strictly periodic, while the sensor is chosen using a local optimization. On the other hand, in the present setting, the sensor cannot be chosen, while the instants of measure are chosen in periodic windows. Ultimate periodicity holds as a qualitative result in both cases.

\section{The optimal instants of two measures.}
\label{sec2meas}
\subsection{Overview of the Results.}
\label{subsec2measQualitative}

In this section, it is supposed that the observer is allowed to choose the instants $t_1$ and $t_2$ $(0\leqslant t_1\leqslant t_2\leqslant T)$ for $n{=}2$ measures with measurement noises $v_1,v_2$ respectively. Certain questions listed in Section \ref{subsecCostMean} above are answered with explicit formulas.

The cost function \eqref{defCost} can be expressed in one of the $3$ forms:
\begin{multline}
\label{fleCostTwoMeasures}
J_{\Sig,T,v_0,v_1,v_2}(t_1,t_2)=\frac{\Sig t_1^2}{2}+v_0t_1+\frac{\Sig\triangle t_2^2}{2}+\\
\frac{(\Sig t_1+v_0)v_1\triangle t_2}{\Sig t_1+v_0+v_1}+\frac{\Sig(T-t_2)^2}{2}+\Vt{2}(T-t_2)=
\\\frac{\Sig t_1^2}{2}+v_0t_1+J_{\Sig,T-t_1,\Vt{1},v_2}(t_2-t_1)=
\\
\frac{\Sig t_1^2}{2}{+}v_0t_1{+}\frac{\Sig\triangle t_2^2}{2}+\frac{(\Sig t_1{+}v_0)v_1\triangle t_2}{\Sig t_1{+}v_0{+}v_1}+\frac{\Sig(T{-}t_2)^2}{2}\\
+\frac{v_2(T-t_2)\left(v_1(v_0+\Sig t_1)+\Sig\triangle t_2(v_1+v_0+\Sig t_1)\right)}{(v_1+v_2)(v_0+\Sig t_1)+v_1v_2+\Sig\triangle t_2(v_1+v_0+\Sig t_1)},
\end{multline}
where $\triangle t_2=t_2-t_1$.

It is proved (Theorems \ref{ThCritereTnuls2mes},\ref{T1sufficient},\ref{thequat1opt}) that this cost function has a unique coordinatewize local minimum which is, therefore, a global minimum. 
A coordinatewize local minimum (CWLM) is defined, in an analogous way to \cite{Tseng} as follows.
\begin{definition}
\label{defCoordLocMin}
Let $f:D\subset\R^2\to\R$ be a real-valued function, and let $z=(z_1,z_2)\in D$. Then the point $z$ is called a \textit{coordinatewize local minimum} (CWLM) of $f$ if
\begin{align*}
\exists \epsilon>0&\  \forall d\in]-\epsilon,\epsilon[,\\ 
(z_1+d,z_2)\in D &\implies f(z)+(d,0)\geqslant f(z)\text{ and}\\
(z_1,z_2+d)\in D &\implies f(z)+(0,d)\geqslant f(z).
\end{align*}
\end{definition}

The argmin of $J_{\Sig,T,v_0,v_1,v_2}$ (unique) is denoted 
\begin{equation}
\label{notationT1optT2opt}
(t^{(2)}_{1,\text{opt}}(\Sig,T,v_0,v_1,v_2),t^{(2)}_{2,\text{opt}}(\Sig,T,v_0,v_1,v_2))
\end{equation}
in accordance with the general notation \eqref{DefTkopt}. 
   
One of the general remarks is that, if $t_1$ is fixed, the subproblem of determining the optimal instant $t_2$ relative to $t_1$ is reduced to determining the optimal instant of one measure (see Section \ref{sec1meas}) with the following parameters: the  length of the process is $T-t_1$, the variance of the estimate of the initial state is $\Vt{1}$, the variance of the error of the measure is $v_2$.
\begin{equation}
\label{t2optconstr}
\argmin_{t_2}J_{\Sig,T,v_0,v_1,v_2}(t_1,t_2)=t_1+t^{(1)}_{\text{opt}}(\Sig,T-t_1,\Vt{1},v_2).
\end{equation}

Finding the minimum of the cost function $J_{\Sig,T,v_0,v_1,v_2}$, studying its properties (uniqueness, position, etc) and its dependence on the parameters, such as monotonicity, continuity, is the goal of this section. An important property of the minimum is its position on the border or in the interior of the domain of definition of the function. It is sufficient to consider three qualitatively different properties of the optimal schedule (``regimes"): either $0=t_1=t_2$ (regime 1) or $0=t_1<t_2$ (regime 2) or $0<t_1\leqslant t_2$ (regime 3). Figure \ref{figExamplesJ2meas} shows examples of the cost function, which correspond to different regimes. This consideration is analogous to the one made in case of one measure.

\textbf{Regime 1} is observed when $T$ is small enough. 
Then, if $t_1=0$ is fixed, the optimal instant for the second measure (determined by \eqref{t2optconstr}) is also zero. By Theorem \ref{ThCritereTnuls2mes} below, this is equivalent to saying that $(0,0)$ is the globally optimal schedule of measures.

When \textbf{regime 1} is not observed, the optimal instant for the second measure is strictly positive. One can search  the optimal schedule using the coordinate descent from $(0,0)$. The first step is finding the optimal instant of the second measure when the first measure is done at $0$ using \eqref{t2optconstr}. Call this instant $t_2^{\langle1\rangle}\in]0,T]$. On the second step, find the optimal instant of the first measure, when the second measure is done at  $t_2^{\langle1\rangle}$. Call this instant $t_{1}^{\langle1\rangle}\in[0,t_{2}^{\langle1\rangle}[$. If $t_{1}^{\langle1\rangle}=0$, the algorithm finishes and returns the schedule $(0,t_2^{\langle1\rangle})$. This situation will be called \textbf{regime 2}. By Theorem \ref{T1sufficient}, this schedule is indeed optimal.

\begin{figure*}
\begin{center}
\includegraphics[width=0.3\textwidth]{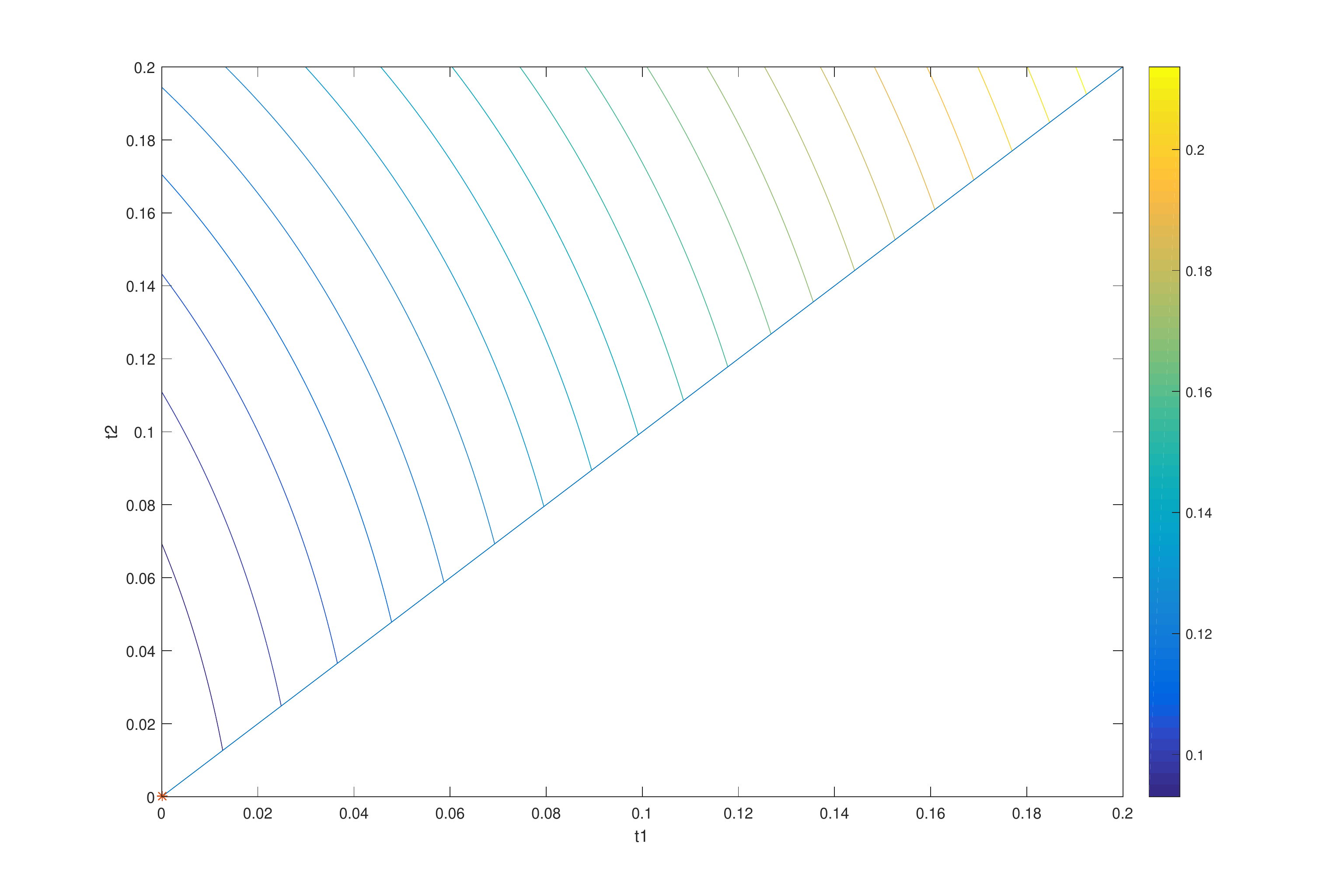}%
\includegraphics[width=0.3\textwidth]{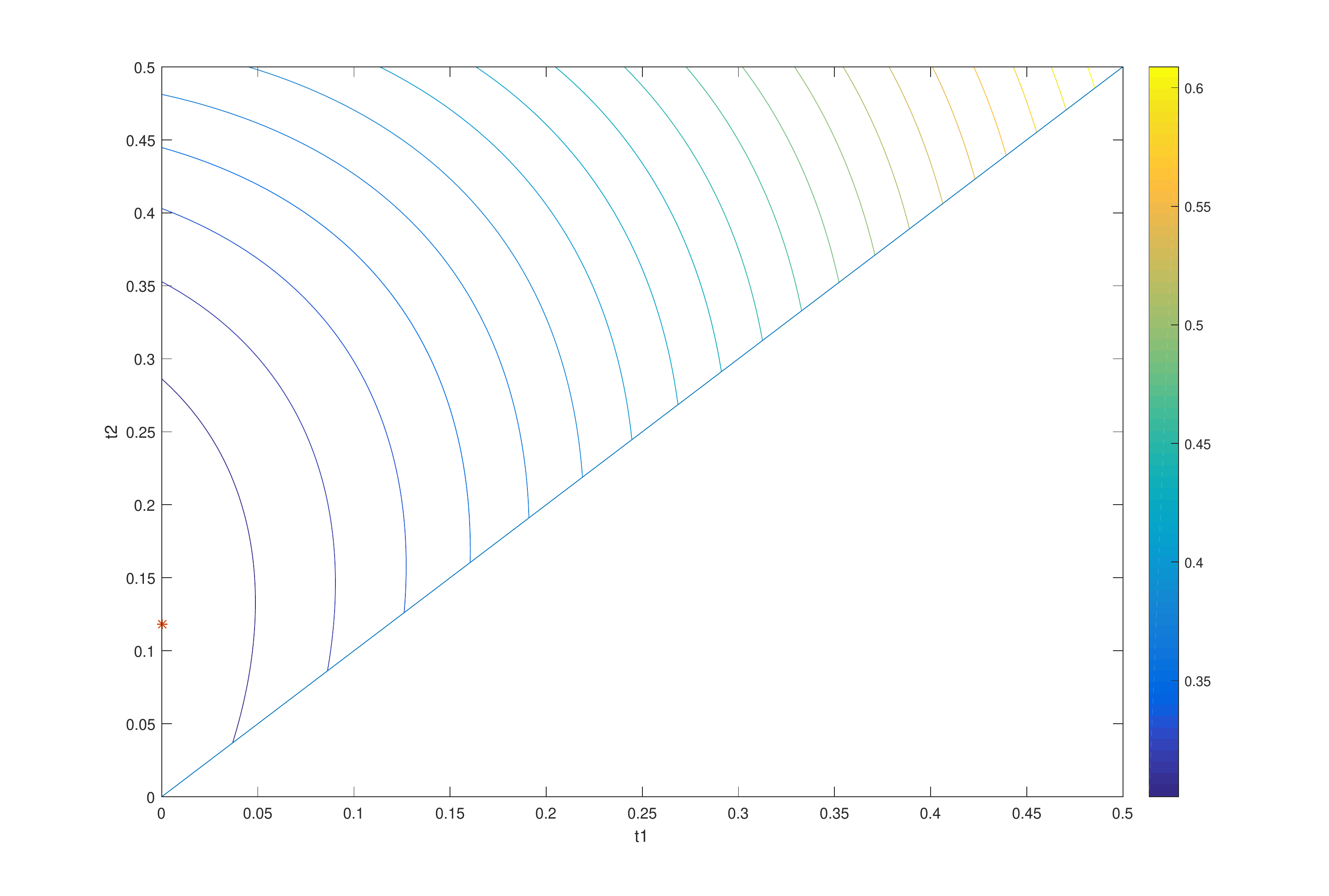}%
\includegraphics[width=0.3\textwidth]{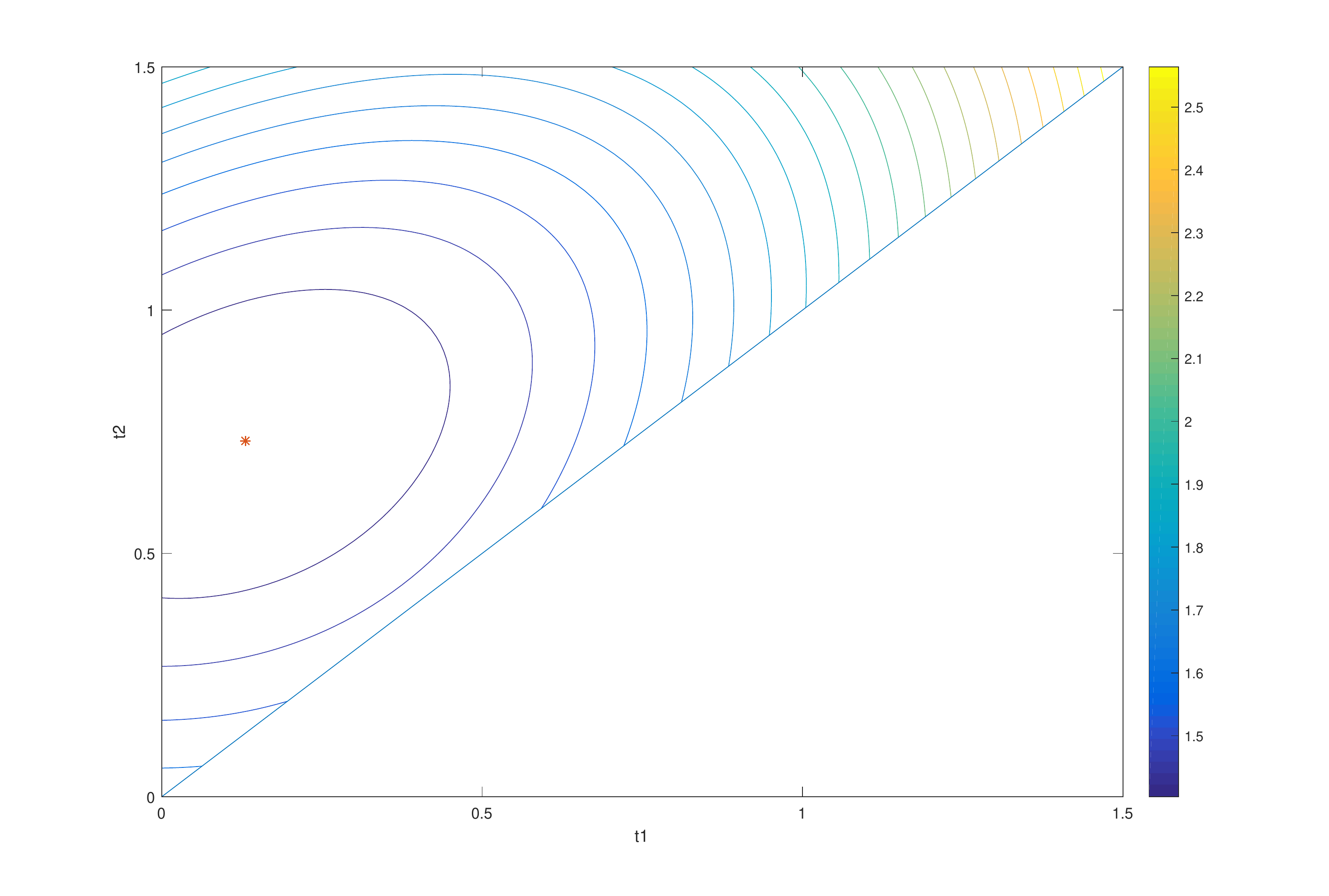}%
\end{center}
\caption{Examples of the cost function $J_{\Sig,T,v_0,v_1,v_2}(t_1,t_2)$. In all plots, $\Sig=1$ and $v_0=v_1=v_2=1$. In the first example, $T=0.2$. In the second example, $T=0.5$. In the third example, $T=1.5$.}
\label{figExamplesJ2meas}
\end{figure*}

\begin{figure*}
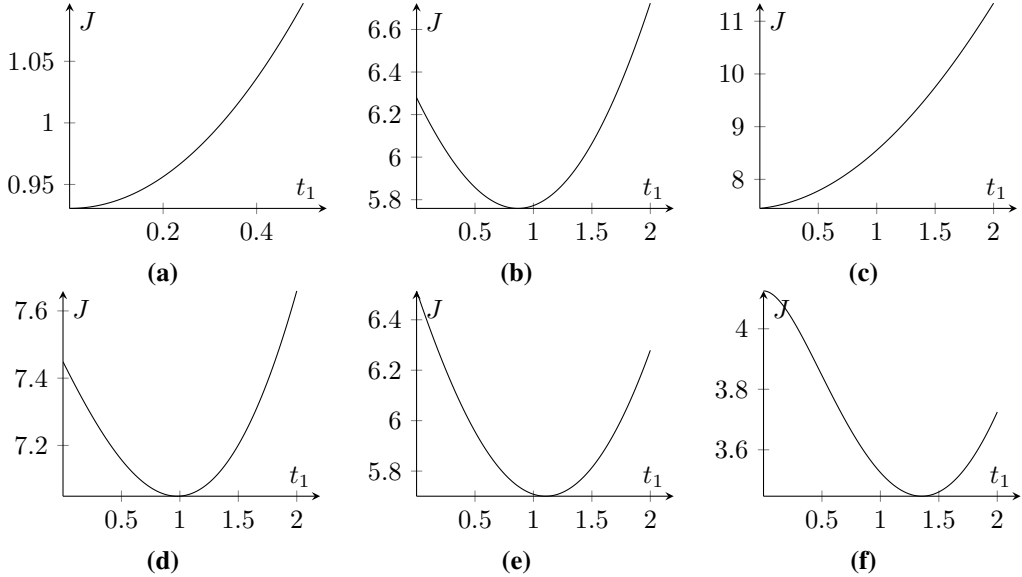

\begin{center}
\begin{tabular}{ccc}
\begin{tikzpicture}
\begin{axis}[
 axis x line=center,
 axis y line=center,
xlabel=$t_1$,
xmin=0,
xmax=0.55,
ylabel=$J$,
width=5cm]
\input{CostParams1.tex}
\end{axis}
\end{tikzpicture}&%
\begin{tikzpicture}
\begin{axis}[
 axis x line=center,
 axis y line=center,
xlabel=$t_1$,
xmin=0,
xmax=2.2,
ylabel=$J$,
width=5cm]
\input{CostParams3.tex}
\end{axis}
\end{tikzpicture}&%
\begin{tikzpicture}
\begin{axis}[
 axis x line=center,
 axis y line=center,
xlabel=$t_1$,
xmin=0,
xmax=2.2,
ylabel=$J$,
width=5cm]
\input{CostParams4.tex}
\end{axis}
\end{tikzpicture}\\
\textbf{(a)}&\textbf{(b)}&\textbf{(c)}\\
\begin{tikzpicture}
\begin{axis}[
 axis x line=center,
 axis y line=center,
xlabel=$t_1$,
xmin=0,
xmax=2.2,
ylabel=$J$,
width=5cm]
\input{CostParams5.tex}
\end{axis}
\end{tikzpicture}&%
\begin{tikzpicture}
\begin{axis}[
 axis x line=center,
 axis y line=center,
xlabel=$t_1$,
xmin=0,
xmax=2.2,
ylabel=$J$,
width=5cm]
\input{CostParams6.tex}
\end{axis}
\end{tikzpicture}&%
\begin{tikzpicture}
\begin{axis}[
 axis x line=center,
 axis y line=center,
xlabel=$t_1$,
xmin=0,
xmax=2.2,
ylabel=$J$,
width=5cm]
\input{CostParamsX.tex}
\end{axis}
\end{tikzpicture}\\
\textbf{(d)}&\textbf{(e)}&\textbf{(f)}
\end{tabular}
\end{center}
\caption{Examples of the cost function $t_1\mapsto J_{\Sig,T,v_0,v_1,v_2}(t_1,t_2)$  in the examples of Figure \ref{examples2meas}. One can observe the difference between \textbf{regime 2} (the function is increasing) and \textbf{regime 3} (the minimum is located inside the interval).}
\label{figExamplesT1toJ2meas}
\end{figure*}

In \textbf{regime 3}, the coordinate descent does not terminate after the first $2$ steps, i.e., $t_{1}^{\langle1\rangle}>0$. Then it is optimal to perform both measures in the interior of the interval $[0,T]$ (Theorem \ref{T1sufficient}). The distinction between  \textbf{regime 2} and \textbf{regime 3} can be done by computing the partial derivative with respect to $t_1$ of the cost function \eqref{fleCostTwoMeasures} at $(0,t_2^{\langle1\rangle})$ or, equivalently,  by comparing $T$ to a critical value.

The largest duration $T$, such that \textbf{regime 1} is observed, will be denoted $T^{(2)}_{2,\text{\rm crit}}$ (can be computed using \eqref{T2Crit}). Similarly, the largest duration $T$, such that \textbf{regime 1} is observed, will be denoted $T^{(2)}_{1,\text{\rm crit}}$ (can be computed using \eqref{eqdefT1crit}).

Figure \ref{figExamplesT1toJ2meas} shows different examples of functions $t_1\mapsto J_{\Sig,T,v_0,v_1,v_2}(t_1,t_2^{\langle1\rangle})$, which can be observed during the second step of this coordinate descent in sample situations.

\begin{figure*}
\begin{center}
\begin{tabular}{|c|c|c|c|c|c|c|c|c|c|c|}
\hline
Example&$v_0$&$v_1$&$v_2$&$t_2^{<1>}$&$T$&$t_1^{<1>}$&Regime&$T_{1,\text{crit}}$&$t^{(2)}_{1,\text{opt}}$&$t^{(2)}_{2,\text{opt}}$\\
\hline
\textbf{a)}&$1$&$1$&$1$&$0.5$&$\frac76$&$0$&Critical&$\frac76$&$0$&$0.5$\\
\hline
\textbf{b)}&$1$&$1$&$1$&$2$&$\frac{71}{18}\approx 3.94$&$0.8668$&\textbf{3}&$\frac76$&$1.0401$&$2.4092$\\
\hline
\textbf{c)}&$3$&$1$&$1$&$2$&$\frac{317}{76}\approx 4.17$&$0$&\textbf{2}&$4.6500$&$0$&$2$\\
\hline
\textbf{d)}&$1$&$3$&$1$&$2$&$\frac{317}{76}\approx 4.17$&$0.9768$&\textbf{3}&$1.1878$&$1.1211$&$2.985$\\
\hline
\textbf{e)}&$1$&$1$&$3$&$2$&$\frac{123}{34}\approx 3.62$&$1.1044$&\textbf{3}&$0.8630$&$1.1968$&$2.4269$\\
\hline
\textbf{f)}&$0$&$1$&$1$&$2$&$3.5$&$1.3538$&\textbf{3}&$0$&$1.5107$&$2.4196$\\
\hline
\end{tabular}
\end{center}
\caption{Sample examples of the problem of seeking the optimal instants of two measurements. In all examples, it is supposed that $\sigma^2=1$ and the regime $1$ is not observed. The columns $v_0,v_1,v_2,t_2^{<1>}$ are parameters, while the other columns can be computed using the formulae of the present article. Figure \ref{figExamplesT1toJ2meas} shows the functions to optimize when finding $t_1^{<1>}$ during the first step of the coordinate descent.}
\label{examples2meas}
\end{figure*}

Section \ref{sec2meas} is organized as follows.

A criterion of \textbf{regime 1}  together with a proof that the optimal schedule does not satisfy $0<t^{(2)}_\text{opt}=t^{(2)}_{2,\text{opt}}$ (Lemma \ref{lem2mesSimul}) is given in Subsection \ref{subsecSimultaneousT1T2}. The critical regime, on the border between regimes $2$ and $3$, is studied in Subsection \ref{subsecT10}. In particular, formulas in closed form are found for finding, to which regime belongs a given set of parameters $\Sig,T,v_0,v_1,v_2$.

Equations for the optimal instants in regime $3$ follow from the results of Subsection \ref{subsecT10}. These are discussed in Section \ref{subsecTlargerT1crit}. Some properties of the optimal instants are deduced from these equations.   

The coordinate descent algorithm can be used for finding the optimal measurement instants in \textbf{regime 3}. It is shown that this algorithm cannot converge to a point different from the global minimum of the cost function. This follows from the uniqueness of a 
CWLM of the cost function $J(t_1,t_2)$ (Theorem \ref{thUniqueCoordMin}, Section \ref{subsecTlargerT1crit}).

\subsection{Strategy of proof.}
\label{subsecProofStrat}
Proving the uniqueness of a 
CWLM of the cost function $J(t_1,t_2)$ is done by considering first the borders of its domain of definition, then the interior. The border $t_1=t_2$ (represented by the diagonal in the plots Figure \ref{figExamplesJ2meas}) is studied in  Subsection \ref{subsecSimultaneousT1T2}. The border $t_1=0$ (represented by the left side in the plots Figure \ref{figExamplesJ2meas}) is studied in Subsection \ref{subsecT10}. The schedules on the border $t_2=T$ can be improved upon by decreasing $t_2$ according to the results relative to one measure. The interior is studied in  Subsections \ref{subsecT10} and \ref{subsecTlargerT1crit} using the previous results.

\subsection{Simultaneous measurements ($t_1=t_2$).}
\label{subsecSimultaneousT1T2}

Taking both measures at the same time makes them equivalent to a single measure of smaller error variance $v_{1,2}$. Therefore, the performance of such schedule is the same as one achieved by one measure. Lemma \ref{lem2mesSimul} shows that, except the case where the measures are at the instant $0$, such schedule can be improved upon by a small displacement of the instant of one measure. The rest of this subsection is devoted to studying the optimality of taking both measures at $0$ (\textbf{regime 1}).

\begin{lemma}
\label{lem2mesSimul}
Consider the cost function \eqref{fleCostTwoMeasures} defined on the triangular domain \({\mathcal T}_T=\{(t_1,t_2) \text{ s.t.} 0\leqslant t_1\leqslant t_2\leqslant T\}\). Let $0<t_1<T$. Then the point $(t_1,t_1)$ is not a coordintewize local minimum of $J_{\Sig,T,v_0,v_1,v_2}(t_1,t_2)$.
\end{lemma}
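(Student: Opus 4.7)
The plan is to show that at least one of the two admissible coordinate variations from $(t_1,t_1)$ strictly decreases $J_{\Sig,T,v_0,v_1,v_2}$. Since $(t_1,t_1)$ sits on the diagonal $t_1=t_2$ of $\mathcal T_T$, the only admissible moves are increasing $t_2$ while $t_1$ stays fixed, and decreasing $t_1$ while $t_2=t_1$ stays fixed. The argument splits into two cases according to the size of $T-t_1$ relative to the one-measure threshold $T^{(1)}_{\text{\rm crit}}(\Sig,\Vt{1},v_2)$.

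In the first case, $T-t_1>T^{(1)}_{\text{\rm crit}}(\Sig,\Vt{1},v_2)$. By the decomposition \eqref{fleCostTwoMeasures2}, with $t_1$ held fixed the map $t_2\mapsto J_{\Sig,T,v_0,v_1,v_2}(t_1,t_2)$ coincides, up to an additive constant, with the one-measure cost $J_{\Sig,T-t_1,\Vt{1},v_2}$ evaluated at $t_2-t_1$. The one-measure derivative formula \eqref{DCostDt1UneMesuretprime} applied to this sub-problem at argument $0$ is strictly negative precisely under the Case 1 hypothesis. Hence $J_{\Sig,T,v_0,v_1,v_2}(t_1,t_1+\epsilon)<J_{\Sig,T,v_0,v_1,v_2}(t_1,t_1)$ for all small $\epsilon>0$, so $(t_1,t_1)$ cannot be a CWLM.

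In the second case, $T-t_1\leq T^{(1)}_{\text{\rm crit}}(\Sig,\Vt{1},v_2)$, so Case 1 does not apply, and we instead analyse the map $\epsilon\mapsto J_{\Sig,T,v_0,v_1,v_2}(t_1-\epsilon,t_1)$ at $\epsilon=0^+$. Differentiating \eqref{fleCostTwoMeasures} term by term --- taking care that $\Gamma^{t_2}_{t_2}$ depends on $t_1$ both directly and implicitly through $\Vt{1}$ --- and writing $u:=v_0+\Sig t_1$, the derivative at $\epsilon=0$ factors as the positive quantity $u/(u+v_1)^2$ times the bracket
\[
\Sig(T-t_1)\,\frac{v_2^2(u+2v_1)}{(\Vt{1}+v_2)^2}\,-\,u(u+v_1).
\]
Strict negativity of this bracket is equivalent to $T-t_1$ being strictly below a certain threshold, and the key algebraic step is to check that this new threshold is strictly larger than $T^{(1)}_{\text{\rm crit}}(\Sig,\Vt{1},v_2)$. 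After substituting $\Vt{1}=v_1u/(u+v_1)$ and clearing positive denominators, the difference of the two thresholds reduces to the manifestly positive quantity
\[
\frac{u^2(v_1+v_2)\bigl[u(v_1+2v_2)+3v_1v_2\bigr]}{u+v_1}.
\]
Therefore the Case 2 hypothesis forces strict inequality, the map $\epsilon\mapsto J_{\Sig,T,v_0,v_1,v_2}(t_1-\epsilon,t_1)$ is strictly decreasing at $\epsilon=0^+$, and again $(t_1,t_1)$ is not a CWLM.

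The main obstacle is the derivation in the second case: computing $\partial J_{\Sig,T,v_0,v_1,v_2}/\partial t_1$ at a point on the diagonal requires the chain rule through $\Vt{1}$, and the subsequent comparison of the two thresholds requires expanding a polynomial of degree at most four in $(u,v_1,v_2)$ and recognising the factorisation $u(v_1+v_2)[u(v_1+2v_2)+3v_1v_2]$. By contrast, Case 1 is immediate from the already-established one-measure analysis.
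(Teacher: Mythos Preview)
Your proof is correct and follows essentially the same approach as the paper: the same case split on $T-t_1$ versus $T^{(1)}_{\text{crit}}(\Sig,\Vt{1},v_2)$, the same use of \eqref{fleCostTwoMeasures2} in the first case, and in the second case the same computation of $\partial J/\partial t_1$ at the diagonal followed by bounding via the Case~2 hypothesis. Your factorisation $u(v_1+v_2)\bigl[u(v_1+2v_2)+3v_1v_2\bigr]$ is exactly the numerator one obtains when verifying that the paper's expression \eqref{proof2mesSimdJdt1eq3} is positive, so you have in fact made explicit the step the paper leaves as ``clearly positive''.
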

Lemma \ref{lem2mesSimul} is proved in 
Technical Report \cite{TechnicalReport}. It corresponds to the intuitive idea that the instants of measure have a tendency to ``repulse" each other.

The following criterion for deciding whether both optimal instants equal zero (\textbf{regime 1}) extends the criterion \eqref{TcritUneMesure} from the case of one measure to the case of two measures.

\begin{theorem}
\label{ThCritereTnuls2mes}
The global minimum of the cost function \eqref{fleCostTwoMeasures} is reached at the point $(0,0)$ if and only if
\begin{multline}
\label{T2Crit}
T\leqslant T^{(2)}_{2,\text{\rm crit}}(\Sig,v_0,v_1,v_2)=T^{(1)}_{\text{\rm crit}}(\Sig,v_{0,1},v_2)=\\\frac{v_{0,1}}{\Sig\left(\frac{v_2}{v_{0,1}+v_2}+1\right)}.
\end{multline}
Moreover, when \eqref{T2Crit} holds, the point $(0,0)$ is the unique 
CWLM of the function \eqref{fleCostTwoMeasures}.
\end{theorem}
\begin{proof}
\textit{Direct part}. Suppose the minimum is at $(0,0)$. 
In particular, the function
\begin{equation}
\label{proofT2critDirect}
t_2\mapsto J_{\Sig,T,v_0,v_1,v_2}(0,t_2)=J_{\Sig,T,v_{0,1},v_2}(t_2)
\end{equation}
has its minimum  at $t_2=0$, therefore the regime 1 in the sense of a single measure is observed (cf Subsection \ref{ssecQuantitativeT1opt}). Therefore, the criterion \eqref{TcritUneMesure} applied to the parameters $\Sig,T,v_{0,1},v_2$ is valid. This is \eqref{T2Crit}.

\textit{Inverse part}. Suppose $T\leqslant T^{(2)}_{2,\text{crit}}$. 
For any $t_1\in[0,T[$ the minimum of the function
\begin{multline}
\label{proofT2critInverse}
[t_1,T]\ni t_2\mapsto J_{\Sig,T,v_0,v_1,v_2}(t_1,t_2)=\\\frac{\Sig t_1^2}{2}+v_0t_1+J_{\Sig,T-t_1,\Vt{1},v_2}(t_2-t_1)
\end{multline} 
is the same as the minimum of
\begin{equation}
\label{proofT2critInverse2}
[t_1,T]\ni t_2\mapsto J_{\Sig,T-t_1,\Vt{1},v_2}(t_2-t_1)
\end{equation}
and can be found using the results of Subsection \ref{ssecQuantitativeT1opt}. More precisely, regime $1$ is observed. Indeed, as the function $v_0\mapsto T^{(1)}_{\text{crit}}(\Sig,v_0,v1)$ is increasing and $\Vt{1}\geqslant v_{0,1}$, one has
\begin{equation}
\label{proofT2critInverse3}
T-t_1\leqslant T\leqslant T^{(1)}_{\text{crit}}(\Sig,v_{0,1},v_2)\leqslant T^{(1)}_{\text{crit}}(\Sig,\Vt{1},v_2).
\end{equation}  
This implies that, under the hypothesis $T\leqslant T^{(2)}_{2,\text{crit}}$, all 
CWLM's of the cost function $J_{\Sig,T,v_0,v_1,v_2}(t_1,t_2)$ are points of the type $(t_1,t_1)$, that is, on the diagonal. By Lemma \ref{lem2mesSimul}, the only candidate for being a 
CWLM of $J_{\Sig,T,v_0,v_1,v_2}(t_1,t_2)$ is the point $(0,0)$, which is, therefore, its global minimum.

\end{proof}

Remark that the critical duration $T^{(2)}_{2,\text{crit}}$ is an increasing function of $v_0$ and of $v_1$, a decreasing function of $\Sig$ and of $v_2$.

It is proved in this subsection that a CWLM on the diagonal can only be achieved at $(0,0)$. Moreover, \eqref{T2Crit} provides a necessary and sufficient condition (depending on the parameters), which allows one to check whether $(0,0)$ is indeed a CWLM. This is equivalent to \textbf{regime 1}.

\subsection{The boundary $t_1=0$.}
\label{subsecT10}

If \eqref{T2Crit} does not hold, consider the boundary $t_1{=}0$. Taking the first measure at zero leads to the same performance as the setting with one measure of error variance $v_2$ and initial information of smaller error variance $v_{0,1}$. Theorem \ref{T1sufficient} shows that the optimal schedule is of this type for some values of parameters. This subsection is devoted to studying when this is satisfied.

The following result answers the question, whether the minimum is located on the boundary.
\begin{theorem}
\label{T1sufficient}
The global minimum of the cost function $J_{\Sig,T,v_0,v_1,v_2}$ is located on the line $(0,\cdot)$ iff $T\leqslant T^{(2)}_{1,\text{\rm crit}}(\Sig,v_0,v_1,v_2)$, where
\begin{equation}
\label{eqdefT1crit}
T^{(2)}_{1,\text{\rm crit}}(\Sig,v_0,v_1,v_2)=T_{t_1}(\Sig,t_{2,1,\text{\rm crit}}(\Sig,v_0,v_1,v_2),v_{0,1},v_2).
\end{equation}
Here, the function $T_{t_1}$ is defined by 
\begin{equation}
\label{TFonctiondet1optRappel}
T_{t_1}(\Sig,t_1,v_0,v_1)=2t_1+\frac{v_0-v_1}\Sig+\frac{2v_1^2}{\Sig(v_0+\Sig t_1+2v_1)}
\end{equation}
and $\Sig t_{2,1,\text{\rm crit}}(\Sig,v_0,v_1,v_2)$ is the largest root of the equation
\begin{equation}
\label{equaT21crit}
Ax^3+Bx^2+Cx+D=0
\end{equation} 
with coefficients
\begin{align}
A(v_0,v_1,v_2)&=-(v_0+v_1)^2(v_0+2v_1),\label{defAt21crit}\\
B(v_0,v_1,v_2)&=(v_0+v_1)\times\notag\\
&(v_0^3-3((v_0+v_1)(v_0+2v_1)v_2+v_0v_1^2)),\label{defBt21crit}\\
C(v_0,v_1,v_2)&=v_2B(v_0,v_1,v_2)+\notag\\
&v_0^2(2v_0+3v_1)(v_0v_1+v_0v_2+v_1v_2),\label{defCt21crit}\\
D(v_0,v_1,v_2)&=v_0^2(v_1+v_2)(v_0+v_1)\times\notag\\
&(v_0v_1+2v_0v_2+3v_1v_2).\label{defDt21crit}
\end{align}
If $T\leqslant T^{(2)}_{1,\text{crit}}(\Sig,v_0,v_1,v_2)$, the minimum of the cost function is located  at the point $(0,t_2^{\langle1\rangle})$, where 
\begin{equation}
\label{deft2iter1}
t_2^{\langle1\rangle}=t^{(1)}_{\text{\rm opt}}(\Sig,T,v_{0,1},v_2)
\end{equation}
according to the more general equation \eqref{t2optconstr}.
\end{theorem}
Theorem \ref{T1sufficient} is proved in the Technical Report \cite{TechnicalReport}.

According to Theorems \ref{T1sufficient} and \ref{ThCritereTnuls2mes}, the optimal schedule is of the form $(0,\cdot)$, but not $(0,0)$ if and only if
\begin{equation}
T^{(2)}_{2,\text{\rm crit}}(\Sig,v_0,v_1,v_2)<T\leqslant T^{(2)}_{1,\text{\rm crit}}(\Sig,v_0,v_1,v_2),
\end{equation}
where $T^{(2)}_{1,\text{\rm crit}}$ is defined by \eqref{eqdefT1crit}-\eqref{defDt21crit} and $T^{(2)}_{2,\text{\rm crit}}$ is defined by \eqref{T2Crit}.  This case will be called \textbf{regime 2}.

The proof of Theorem \ref{T1sufficient} immediately leads to the following corollaries.
\begin{corollary}
\label{UniqueCWLMregime2}
If $T\leqslant T^{(2)}_{1,\text{\rm crit}}(\Sig,v_0,v_1,v_2)$, the point $(0,t_2^{\langle1\rangle})$ is the only CWLM of the cost function.
\end{corollary}
\begin{corollary}
\label{CriticalDurationsIncreasing}
Let $\Sig,v_1,v_2\in\R_+^*$. Then, the critical durations $T^{(2)}_{1,\text{\rm crit}},T^{(2)}_{2,\text{\rm crit}}$ as well as the duration $t_{2,1,\text{\rm crit}}$, appearing in the formulation of Theorem \ref{T1sufficient}, are strictly increasing functions of $v_0$.
\end{corollary}

The quantity $t_{2,1,\text{\rm crit}}$, appearing in the formulation of Theorem \ref{T1sufficient} has the following interpretation: the optimal schedule in the case of the duration $T=T^{(2)}_{1,\text{crit}}(\Sig,v_0,v_1,v_2)$ is $(0,t_{2,1,\text{\rm crit}})$.

\subsection{The general case and its Properties.}
\label{subsecTlargerT1crit}

Suppose that the process is long enough, i.e. $T>T^{(2)}_{1,\text{crit}}(\Sig,v_0,v_1,v_2)$. Therefore, \textbf{regime 3} is observed (Theorem \ref{T1sufficient}). In this subsection, equations for determining the optimal instants of measure $t^{(2)}_{1,\text{opt}}$ and $t^{(2)}_{2,\text{opt}}$ will be derived. 
Nontrivial optimal instants, which cannot be computed using formulae for $1$ measure, are defined in this section, and some properties of these optimal instants are proved. 

\begin{theorem}
\label{thequat1opt}
Suppose that the length of the process is larger than the critical durations: $T>T^{(2)}_{1,\text{\rm crit}}(\Sig,v_0,v_1,v_2)$. Then, the cost function $J_{\Sig,T,v_0,v_1,v_2}$ has a unique 
CWLM $(t^{(2)}_{1,\text{\rm opt}},t^{(2)}_{2,\text{\rm opt}})$, and it satisfies
\begin{align}
t^{(2)}_{2,\text{\rm opt}}-t^{(2)}_{1,\text{\rm opt}}&=t_{\text{\rm opt}}^{(1)}(\Sig,T-t_{1,\text{\rm opt}}^{(2)},(v_0+\Sig t_{1,\text{\rm opt}}^{(2)})\parallelsum v_1,v_2),\label{t1t2optEqSndMeas}\\
t^{(2)}_{2,\text{\rm opt}}-t^{(2)}_{1,\text{\rm opt}}&=t_{2,1,\text{\rm crit}}(\Sig,v_0+t^{(2)}_{1,\text{\rm opt}},v_1,v_2),\label{t1t2optEq2meas}
\end{align}
where the function $t_{2,1,\text{\rm crit}}$ is defined 
as the largest real root of the equation \eqref{equaT21crit} with coefficients \eqref{defAt21crit}-\eqref{defDt21crit},
and the function $t_{\text{\rm opt}}^{(1)}$ is defined Equation \eqref{t1optGen}. Moreover, the system of equations \eqref{t1t2optEqSndMeas},\eqref{t1t2optEq2meas} has a unique solution with respect to the variables $t^{(2)}_{1,\text{\rm opt}},t^{(2)}_{2,\text{\rm opt}}-t^{(2)}_{1,\text{\rm opt}}\in\R_+^*$.
\end{theorem}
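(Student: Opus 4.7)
The plan is to show (a) every CWLM of $J_{\Sig,T,v_0,v_1,v_2}$ lies in the open interior of $\mathcal T_T$, (b) the two first-order conditions at an interior CWLM are equivalent to the system \eqref{t1t2optEq2meas}--\eqref{t1t2optEqSndMeas}, and (c) that system reduces to a single scalar equation in $t_1$ to which the monotonicity of Lemma~\ref{lemT1Increasing} applies. Existence of a CWLM will then follow from compactness once the boundary has been excluded.

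\textbf{The interior of the triangle.} Let $(\tilde t_1,\tilde t_2)$ be any CWLM. The diagonal is excluded by Lemma~\ref{lem2mesSimul} for $\tilde t_1>0$; the corner $(0,0)$ is excluded because $T>T^{(2)\star}_{1,\text{crit}}>T^{(2)}_{2,\text{crit}}$ and Subsection~\ref{subsecT2crit}; the rest of the edge $\{0\}\times\,]0,T]$ is ruled out by the second part of Theorem~\ref{T1critnecessary}. The edge $t_2=T$ is excluded by computing directly
\[
\frac{\partial J}{\partial t_2}\Big|_{t_2=T}=\Sig(T-t_1)+\Vt{1}-\Vt{2}>0,
\]
since $\Vt{2}=v_2\parallelsum(\Vt{1}+\Sig(T-t_1))<\Vt{1}+\Sig(T-t_1)$, so $J$ strictly decreases when $t_2$ is moved slightly to the left of $T$. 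Thus $(\tilde t_1,\tilde t_2)$ lies in the open interior and both partial derivatives of $J$ vanish there.

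\textbf{First-order conditions and the main difficulty.} Vanishing of the $t_2$-partial combined with \eqref{t2optconstr} gives \eqref{t1t2optEqSndMeas} immediately. For \eqref{t1t2optEq2meas} the key step, and the step that demands the most bookkeeping, is the shift identity
\[
J_{\Sig,T,v_0,v_1,v_2}(\tilde t_1+s_1,\tilde t_1+s_2)=\frac{\Sig\tilde t_1^2}{2}+v_0\tilde t_1+J_{\Sig,T-\tilde t_1,v_0+\Sig\tilde t_1,v_1,v_2}(s_1,s_2),
\]
which I would verify by direct substitution in \eqref{fleCostTwoMeasures}: the first two summands of $J$ yield the prefactor, the middle term is unchanged because $t_2-t_1=s_2-s_1$, and the last term transforms correctly because $\Vt{1}$ and $\Vt{2}$ depend on $(t_1,t_2)$ only through $v_0+\Sig t_1$ and the differences. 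Since the prefactor depends only on $\tilde t_1$, the shifted point $(0,\tilde t_2-\tilde t_1)$ is a CWLM of the shifted cost function $J_{\Sig,T-\tilde t_1,v_0+\Sig\tilde t_1,v_1,v_2}$. Equation~\eqref{t1t2optEqSndMeas}, read via the inverse formula \eqref{TFonctiondet1opt}, is precisely the hypothesis $T_{\text{shifted}}=T_{t_1}(\Sig,s_2,v_1\parallelsum(v_0+\Sig\tilde t_1),v_2)$ required by Lemma~\ref{lemdJdt10}. The vanishing of the shifted $s_1$-partial at $s_1=0$ therefore becomes $P_{v_0+\Sig\tilde t_1,v_1,v_2}\!\left(\Sig(\tilde t_2-\tilde t_1)\right)=0$, and by Proposition~\ref{propCubicPolyn} and Definition~\ref{defT21crit} this is exactly \eqref{t1t2optEq2meas}.

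\textbf{Uniqueness and existence.} Inserting \eqref{t1t2optEq2meas} into \eqref{t1t2optEqSndMeas} by means of the inverse formula \eqref{TFonctiondet1opt} and the definition \eqref{eqdefT1crit} of $T^{(2)\star}_{1,\text{crit}}$ yields the scalar equation
\[
T=\tilde t_1+T^{(2)\star}_{1,\text{crit}}(\Sig,v_0+\Sig\tilde t_1,v_1,v_2)=:F(\tilde t_1).
\]
By Lemma~\ref{lemT1Increasing} the function $F$ is continuous and strictly increasing on $[0,+\infty[$; the regime hypothesis gives $F(0)=T^{(2)\star}_{1,\text{crit}}(\Sig,v_0,v_1,v_2)<T$ and obviously $F(\tilde t_1)\to+\infty$, so the intermediate value theorem yields a unique solution $\tilde t_1^\star>0$, which together with \eqref{t1t2optEq2meas} determines $\tilde t_2^\star=\tilde t_1^\star+t_{2,1,\text{crit}}(\Sig,v_0+\Sig\tilde t_1^\star,v_1,v_2)$ uniquely. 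This is the unique-solution statement. Existence of a CWLM now comes from compactness: $J$ attains its minimum on the compact set $\mathcal T_T$, and the boundary analysis of the first paragraph forces this minimizer into the open interior, where it is automatically a CWLM; by the uniqueness just established it must coincide with $(\tilde t_1^\star,\tilde t_2^\star)$, which proves the theorem.
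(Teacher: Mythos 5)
Your proof follows, in substance, the same route as the paper: you derive \eqref{t1t2optEq2meas}--\eqref{t1t2optEqSndMeas} as the first-order conditions at an interior CWLM by shifting to $J_{\Sig,T-\tilde t_1,v_0+\Sig\tilde t_1,v_1,v_2}$ and invoking Lemma \ref{lemdJdt10} together with Proposition \ref{propCubicPolyn}, and you obtain uniqueness from the monotonicity results of Subsection \ref{t21critIncr}, existence from compactness plus exclusion of the boundary. The only real repackaging is in the uniqueness step: the paper intersects the increasing curve $I^{(2)}_{\Sig,v_0,v_1,v_2}$ with the decreasing curve $I^{(1)}_{\Sig,T,v_0,v_1,v_2}$, whereas you eliminate $\tilde t_2-\tilde t_1$ and reduce to the scalar equation $T=\tilde t_1+T^{(2)\star}_{1,\text{crit}}(\Sig,v_0+\Sig\tilde t_1,v_1,v_2)$, handled with Lemma \ref{lemT1Increasing} exactly as in the proof of Theorem \ref{T1sufficient}; both variants rest on Lemma \ref{lemTSecondMeasureIncreasing}. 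Your boundary analysis is in fact more explicit than the paper's one-line appeal to Theorem \ref{T1critnecessary}, and your direct computation of $\partial J/\partial t_2$ at $t_2=T$ is correct.

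Two small repairs are needed. First, your boundary exclusion misses the corner $(T,T)$: Lemma \ref{lem2mesSimul} assumes $t_1<T$, and your $t_2$-derivative argument says nothing there, because at $(T,T)$ decreasing $t_2$ leaves the domain and Definition \ref{defCoordLocMin} only constrains displacements that stay in $\mathcal T_T$. The fix is one line: by \eqref{proof2mesSimdJdt1} with $t_2=T$ the $t_1$-derivative at $(T,T)$ equals $v_0+\Sig T-v_1\parallelsum(v_0+\Sig T)>0$, so decreasing $t_1$ lowers the cost and $(T,T)$ is not a CWLM. Second, you attribute continuity of $\tilde t_1\mapsto T^{(2)\star}_{1,\text{crit}}(\Sig,v_0+\Sig\tilde t_1,v_1,v_2)$ to Lemma \ref{lemT1Increasing}, which only asserts strict monotonicity; either justify continuity separately (the root $t_{2,1,\text{crit}}$ of \eqref{equaT21crit} is the unique positive root and the polynomial changes sign there, so it depends continuously on $v_0$, and $T_{t_1}$ is a continuous rational expression), or simply drop the intermediate-value step: existence of a solution of the system already follows from your compactness argument, since the global minimizer is a CWLM, is interior by the (repaired) boundary analysis, and therefore satisfies \eqref{t1t2optEq2meas}--\eqref{t1t2optEqSndMeas}; strict monotonicity of $F$ alone then gives uniqueness.
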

Theorem \ref{thequat1opt} is proved in Technical Report \cite{TechnicalReport}.

The system of equations \eqref{t1t2optEqSndMeas},\eqref{t1t2optEq2meas} is of the form
\begin{equation*}
\left\{\begin{aligned}y&=I^{(1)}_{\Sig,T,v_0,v_1,v_2}(x)\\y&=I^{(2)}_{\Sig,v_0,v_1,v_2}(x)\end{aligned}\right.
\end{equation*}
where $x=t^{(2)}_{1,\text{\rm opt}}$ and $y=t^{(2)}_{2,\text{\rm opt}}-t^{(2)}_{1,\text{\rm opt}}$. It is interesting to study the behavior of the functions $I^{(1)}$ and $I^{(2)}$, which appear in this system. Their full definitions are
\begin{equation}
\label{funcEqt2minust11meas}
I^{(1)}_{\Sig,T,v_0,v_1,v_2}(\tilde{t}_1)=
\left\{\begin{array}{l}0\text{ if $\tilde{t}_1\geqslant T$}\\t_{\text{opt}}^{(1)}(\Sig,T-\tilde{t}_{1},(v_0+\Sig \tilde{t}_{1})\parallelsum v_1,v_2)\\\text{ otherwize.}\end{array}\right.
\end{equation}
and
\begin{equation}
\label{funcEqt2minust12meas}
I^{(2)}_{\Sig,v_0,v_1,v_2}(\tilde{t}_1)=t_{2,1,\text{\rm crit}}(\Sig,v_0+\Sig \tilde{t}_1,v_1,v_2).
\end{equation}
Here, the continuation of the function $I^{(1)}(\tilde{t}_1)$ by zero for large values serves a purely technical purpose.

The function $I^{(2)}$ has the following interpretation. Given the parameters $\Sig,v_0,v_1,v_2$, 
to each $\tilde{t}_{1}\in\R_+$, a unique duration $T$ is associated such that $\tilde{t}_1$ is the optimal instant of the first measure. Then, $I^{(2)}(t_1)$ is the distance between the optimal instants for the duration $T$. By Corollary \ref{CriticalDurationsIncreasing}, it is strictly increasing.

The function $I^{(1)}$ has a simpler definition and its interpretation is: given $\Sig,T,v_0,v_1,v_2$, it associates to each $\tilde{t}_1$ (suboptimal in general) the best interval $\tilde{t}_2-\tilde{t}_1$ between the measures. The function $I^{(1)}$ ``selects"  the point associated to the given length $T$ on the graph of $I^{(2)}_{\Sig,v_0,v_1,v_2}$. It is decreasing by Proposition \ref{propAsymptoticsT1opt}. The optimal schedule corresponds to the intersection point of the graphs of these functions.

Figure \ref{figSystemEquations} shows an example of the behavior of the functions $I^{(1)}$ and $I^{(2)}$.

\begin{figure}
\begin{center}
\includegraphics[width=0.48\textwidth]{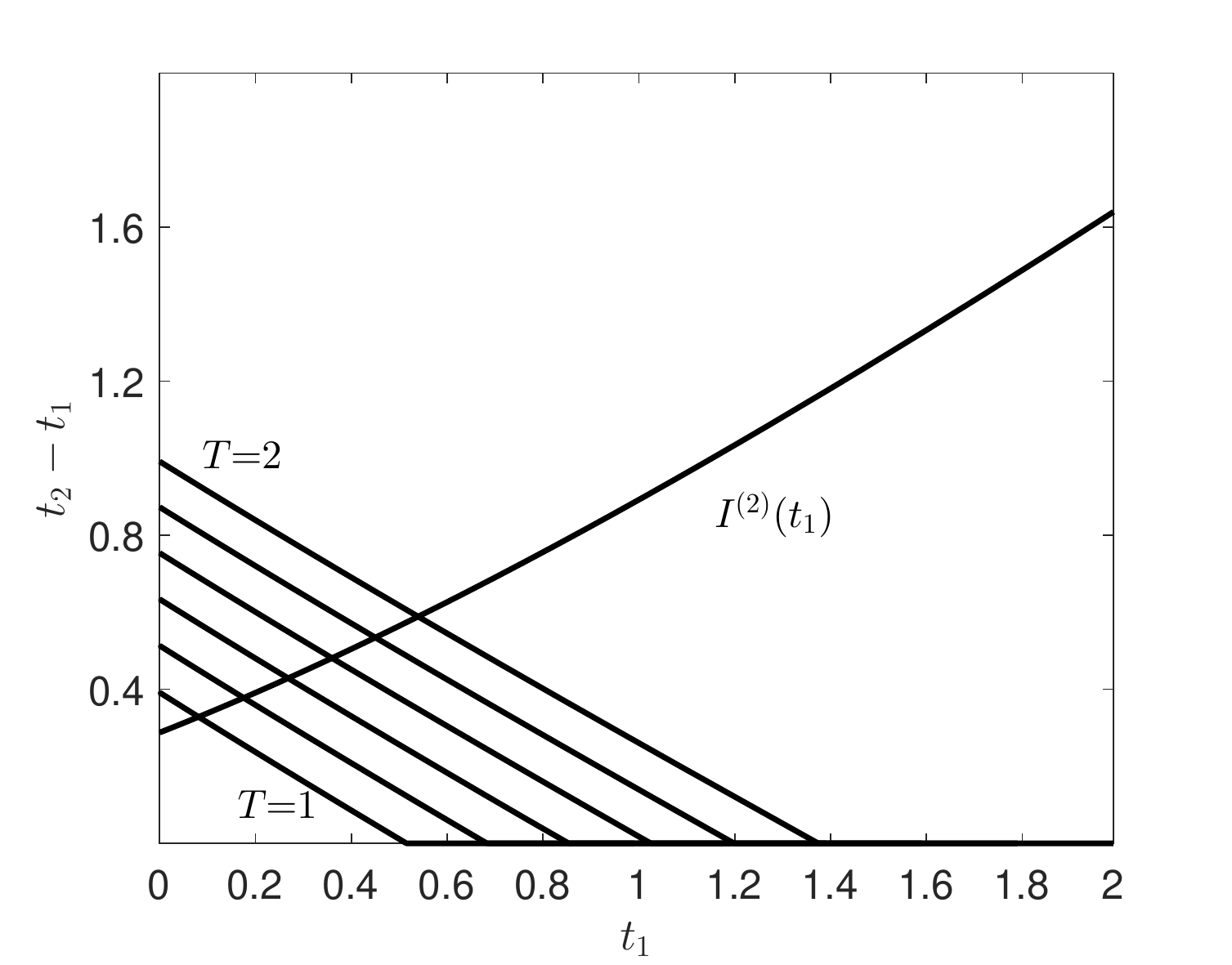}%
\end{center}
\caption{The functions $I^{(2)}(t_1)$ and $I^{(1)}_T(t_1)$. The fixed parameters are $\Sig=1,v_0=1,v_1=2,v_3=3$. $T$ takes values $1,1.2,1.4,1.6,1.8,2$. The optimal instant of first measure $t^{(2)}_{1,\text{\rm opt}}$ is the abscissa of the point of intersection. The distance between the measures in the optimal schedule is the ordinate of the point of intersection.}
\label{figSystemEquations}
\end{figure}

The next theorems assemble results for all three regimes and answer to conjectures announced in Sections \ref{subsecCostMean} and \ref{subsec2measQualitative}. 
\begin{theorem}
\label{thT1T2optMonotoneCont}
Let $\Sig,v_0,v_1,v_2\in\R_+^*$. For each $T\geqslant 0$, the minimizer of the cost function $J_{\Sig,T,v_0,v_1,v_2}$ is unique.

The functions $T\mapsto t^{(2)}_{1,\text{\rm opt}}(\Sig,T,v_0,v_1,v_2)$ and $T\mapsto t^{(2)}_{2,\text{\rm opt}}(\Sig,T,v_0,v_1,v_2)$ are continuous and monotonically increasing.
\end{theorem}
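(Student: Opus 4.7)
The plan is to decompose the range $T\geqslant 0$ into the three regimes and verify both statements piecewise, then glue at the two transitions $T^{(2)}_{2,\text{crit}}$ and $T^{(2)\star}_{1,\text{crit}}$. Uniqueness needs no new work: regime~1 is handled in Subsection~\ref{subsecT2crit}, regime~2 by Theorem~\ref{T1sufficient}, and regime~3 is the last sentence of Theorem~\ref{thequat1opt}.

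For the monotonicity, regimes~1 and~2 are direct: both coordinates vanish in regime~1, while in regime~2 one has $t^{(2)}_{1,\text{opt}}\equiv 0$ and $t^{(2)}_{2,\text{opt}}=t^{(1)}_{\text{opt}}(\Sig,T,v_{0,1},v_2)$, which is strictly increasing by inspection of~\eqref{t1optfonctiondeT}. The real work is in regime~3, where $t^{(2)}_{1,\text{opt}}(T)$ is characterized as the unique zero of the map $g_T(\tilde{t}_1):=I^{(2)}_{\Sig,v_0,v_1,v_2}(\tilde{t}_1)-I^{(1)}_{\Sig,T,v_0,v_1,v_2}(\tilde{t}_1)$. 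Since $I^{(2)}$ is strictly increasing in $\tilde{t}_1$ (Lemma~\ref{lemTSecondMeasureIncreasing}) and $I^{(1)}$ is non-increasing in $\tilde{t}_1$, $g_T$ is strictly increasing in $\tilde{t}_1$. Moreover, since $I^{(1)}$ is strictly increasing in its duration argument $T-\tilde{t}_1$ on the set where it is positive (again by~\eqref{t1optfonctiondeT}), one has $g_{T'}(t^{(2)}_{1,\text{opt}}(T))<g_T(t^{(2)}_{1,\text{opt}}(T))=0$ for any $T'>T$, provided $I^{(1)}_{\Sig,T,v_0,v_1,v_2}(t^{(2)}_{1,\text{opt}}(T))>0$. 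The latter is exactly the regime~3 condition $t^{(2)}_{2,\text{opt}}-t^{(2)}_{1,\text{opt}}>0$. Hence the zero of $g_{T'}$ lies strictly to the right of that of $g_T$, so $t^{(2)}_{1,\text{opt}}$ is strictly increasing in $T$. Then~\eqref{t1t2optEq2meas} combined with Lemma~\ref{lemTSecondMeasureIncreasing} transfers the conclusion to $t^{(2)}_{2,\text{opt}}$.

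For continuity, within each regime it follows from continuous dependence on $T$ in the defining expressions (closed form in regimes~1 and~2; the fact that the inverse of a continuous strictly monotone function is continuous, applied to $g_T$ in regime~3). At $T=T^{(2)}_{2,\text{crit}}$, the identity $T^{(2)}_{2,\text{crit}}=T^{(1)}_{\text{crit}}(\Sig,v_{0,1},v_2)$ yields $t^{(1)}_{\text{opt}}(\Sig,T^{(2)}_{2,\text{crit}},v_{0,1},v_2)=0$, so the regime~2 value matches the regime~1 value $(0,0)$. At $T=T^{(2)\star}_{1,\text{crit}}$, the definition~\eqref{eqdefT1crit} shows that $(\tilde{t}_1,\tilde{t}_2)=(0,t_{2,1,\text{crit}}(\Sig,v_0,v_1,v_2))$ already satisfies the regime~3 system~\eqref{t1t2optEq2meas}--\eqref{t1t2optEqSndMeas}, matching the regime~2 value at that $T$. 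The main obstacle will be the strict-shift argument in regime~3, which hinges on the positivity of $I^{(1)}$ at the optimum; once that is secured, the remainder is essentially bookkeeping across regimes.
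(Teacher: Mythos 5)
Your proposal is correct and follows essentially the same route as the paper: split by regime, use the explicit formulas for $T\leqslant T^{(2)\star}_{1,\text{crit}}$, characterize the regime-3 optimum as the intersection of the strictly increasing curve $I^{(2)}_{\Sig,v_0,v_1,v_2}$ with the family $I^{(1)}_{\Sig,T,v_0,v_1,v_2}$ that grows pointwise in $T$, and glue at the critical durations via the identity $I^{(2)}_{\Sig,v_0,v_1,v_2}(0)=t_{2,1,\text{crit}}(\Sig,v_0,v_1,v_2)$. The only cosmetic differences (an extra breakpoint at $T^{(2)}_{2,\text{crit}}$, the explicit strict-shift argument, and passing monotonicity to $t^{(2)}_{2,\text{opt}}$ through \eqref{t1t2optEq2meas} and Lemma \ref{lemTSecondMeasureIncreasing} rather than reading it off the intersection ordinate) do not change the substance.
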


Theorem \ref{thT1T2optMonotoneCont} is proved in the Technical Report \cite{TechnicalReport}.

Moreover, the cost functions $J(t_1,t_2)$ have unique CWLM's.

\begin{theorem}
\label{thUniqueCoordMin}
Let $\Sig,v_0,v_1,v_2,T\in\R_+^*$. Then, the function $J_{\Sig,T,v_0,v_1,v_2}(t_1,t_2)$ has a unique CWLM. 
\end{theorem}
\begin{proof}
The theorem follows from Theorem \ref{ThCritereTnuls2mes} (in case of \textbf{regime 1}), Theorem \ref{T1sufficient} (in case of \textbf{regime 2}) and Theorem \ref{thequat1opt} (in case of \textbf{regime 3}). 
\end{proof}

The global behavior of the optimal instants is illustrated Figure \ref{FigExp2measures}. 
Both measures should be done as fast as possible for small $T$ (\textbf{regime 1}). When the duration $T$ is larger than a critical value $T^{(2)}_{2,\text{crit}}$, the instant of the second measure becomes distinct from zero and increases (\textbf{regime 2}). When the duration $T$ is larger than another critical value $T^{(2)}_{1,\text{crit}}$, the instant of the first measure becomes distinct from zero as well and increases (\textbf{regime 3}). Both optimal instants of measure exhibit continuity at the critical durations. This behavior is in accordance with Theorems \ref{ThCritereTnuls2mes}, \ref{T1sufficient} and \ref{thT1T2optMonotoneCont}.

\begin{figure}
\begin{center}
\input{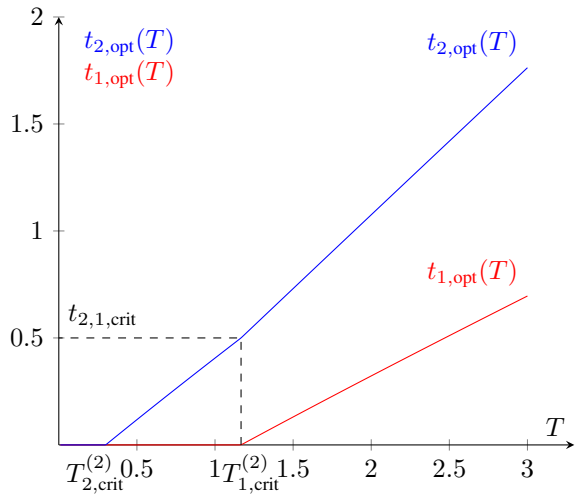}
\end{center}
\caption{$t^{(2)}_{1,\text{\rm opt}}(\Sig,T,v_0,v_1,v_2)$ and $t^{(2)}_{2,\text{\rm opt}}(\Sig,T,v_0,v_1,v_2)$ as functions of $T$ in a particular case. The parameters are $\Sig=1, v_0=v_1=v_2=1, T\in]0,5]$. In this example, the critical durations equal: $T^{(2)}_{2,\text{crit}}=0.3$ and $T^{(2)}_{1,\text{crit}}=\frac76$. For $T=\frac76$, the optimal schedule is $(0,0.5)$.}
\label{FigExp2measures}
\end{figure}

\subsection{The numerical algorithm of Coordinate Descent.}
\label{subsecCoordDescentDescription}

Theorem \ref{thequat1opt} provides a convenient theoretical description of the optimal schedule $(t^{(2)}_{1,\text{\rm opt}},t^{(2)}_{2,\text{\rm opt}})$ in \textbf{regime 3}. Let us look for an efficient algorithm for finding numeric values of these instants. The coordinate descent is proposed as such algorithm in this article. The first step of the coordinate descent is important as well in defining the $3$ regimes. 
This algorithm is described in Appendix \ref{Appendix}.

Updating $t_1$ is finding the minimum of a cost function $J(t_1)$ of a special type defined on a real interval. The golden-section search is used in this step. Some examples of functions this class are given Figure \ref{figExamplesT1toJ2meas}. It can be conjectured that all functions of this class are quasi-convex. If the function $J(t_1)$ is quasi-convex, the golden-section search is guaranteed to converge to the minimum of this function.

The cost function $J(t_1,t_2)$ is guaranteed to have only one CWLM, therefore the coordinate descent cannot converge to a point different from the global minimum of the function.

\subsection{The Experimental Performance of the Coordinate Descent.}
\label{subsecCoordDescent}
$100$ random runs of the algorithm have been performed in order to explore its convergence and the speed of convergence.

The parameters $\Sig=1$ and $T=10$ were fixed and the triples $(v_0,v_1,v_2)$ were chosen randomly from the region of the cube $[1,10]^3$, which corresponds to \textbf{regime 3}, according to the uniform distribution. More precisely, candidate points were chosen in $[1,10]^3$, then they were use in the experiment if they satisfied the condition of \textbf{regime 3}: $T^{(2)}_{1,\text{crit}}(\Sig,v_0,v_1,v_2)>T$.

The results are shown Figure \ref{figPerfsCordDescent}. They suggest an exponential convergence. Furthermore, the steps became shorter than $2\cdot10^{-6}$ after less than $10$ steps in all runs.

\begin{figure*}[p]
\begin{center}
\begin{tabular}{ccc}
\includegraphics[width=5cm]{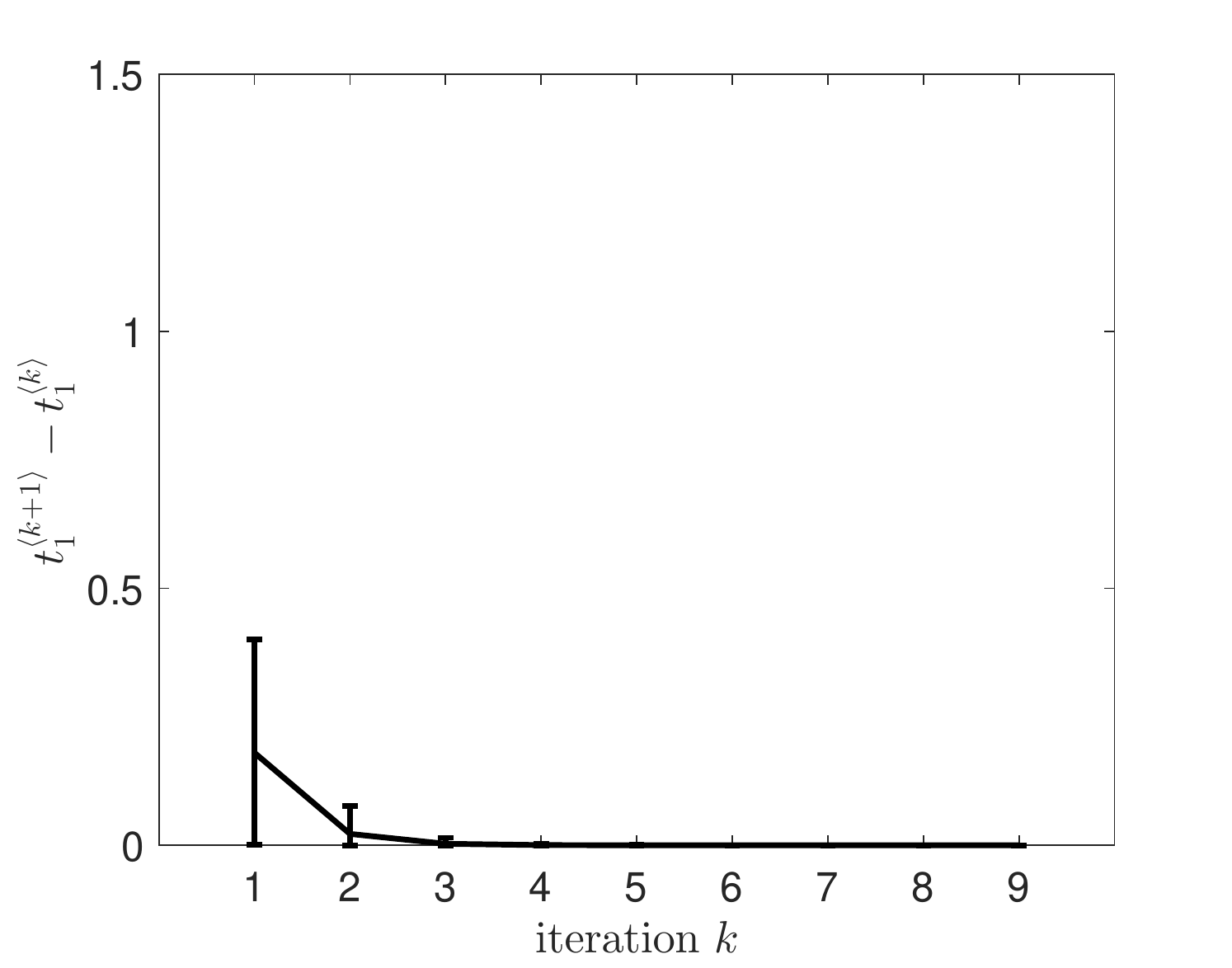}&\includegraphics[width=5cm]{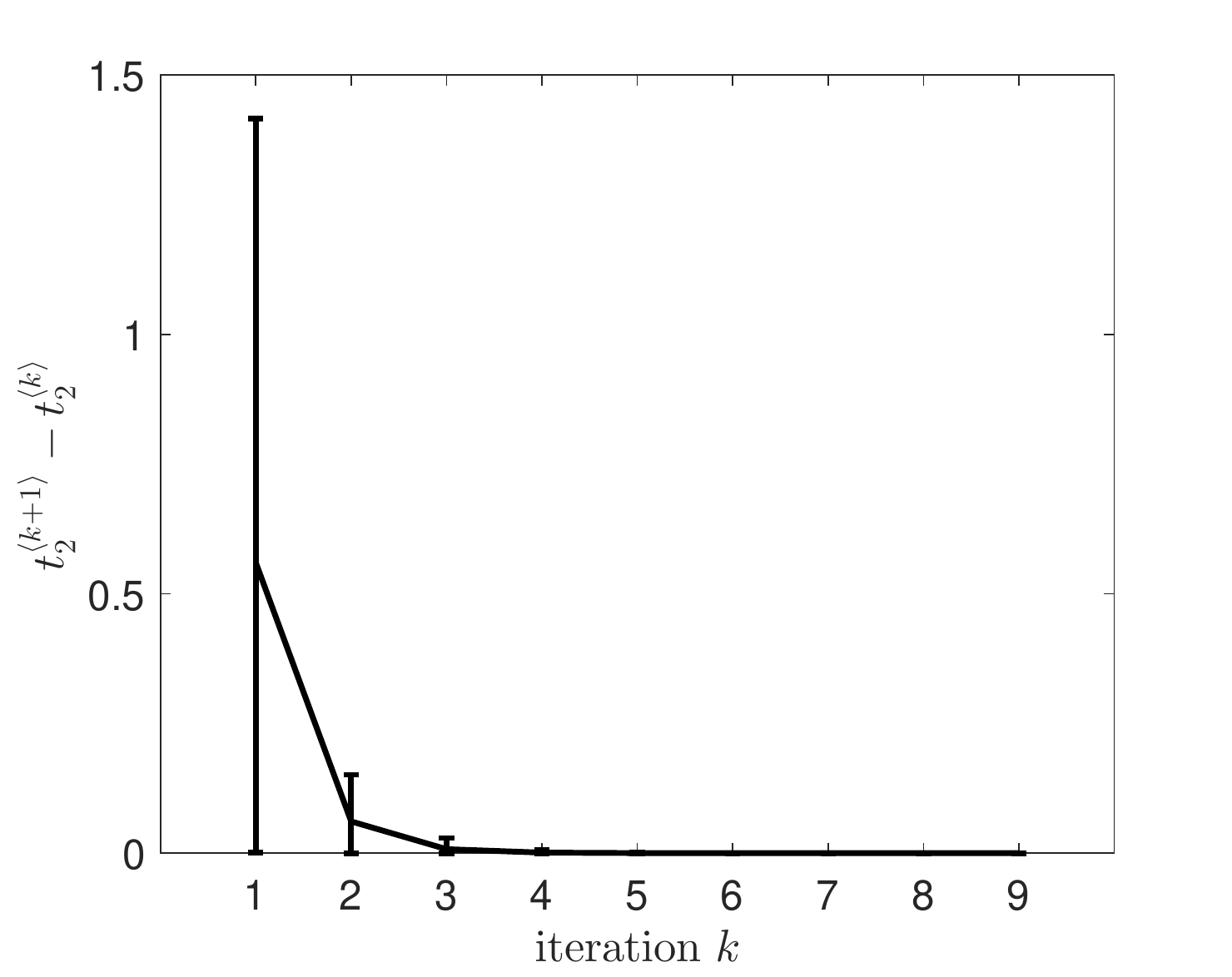}&\includegraphics[width=5cm]{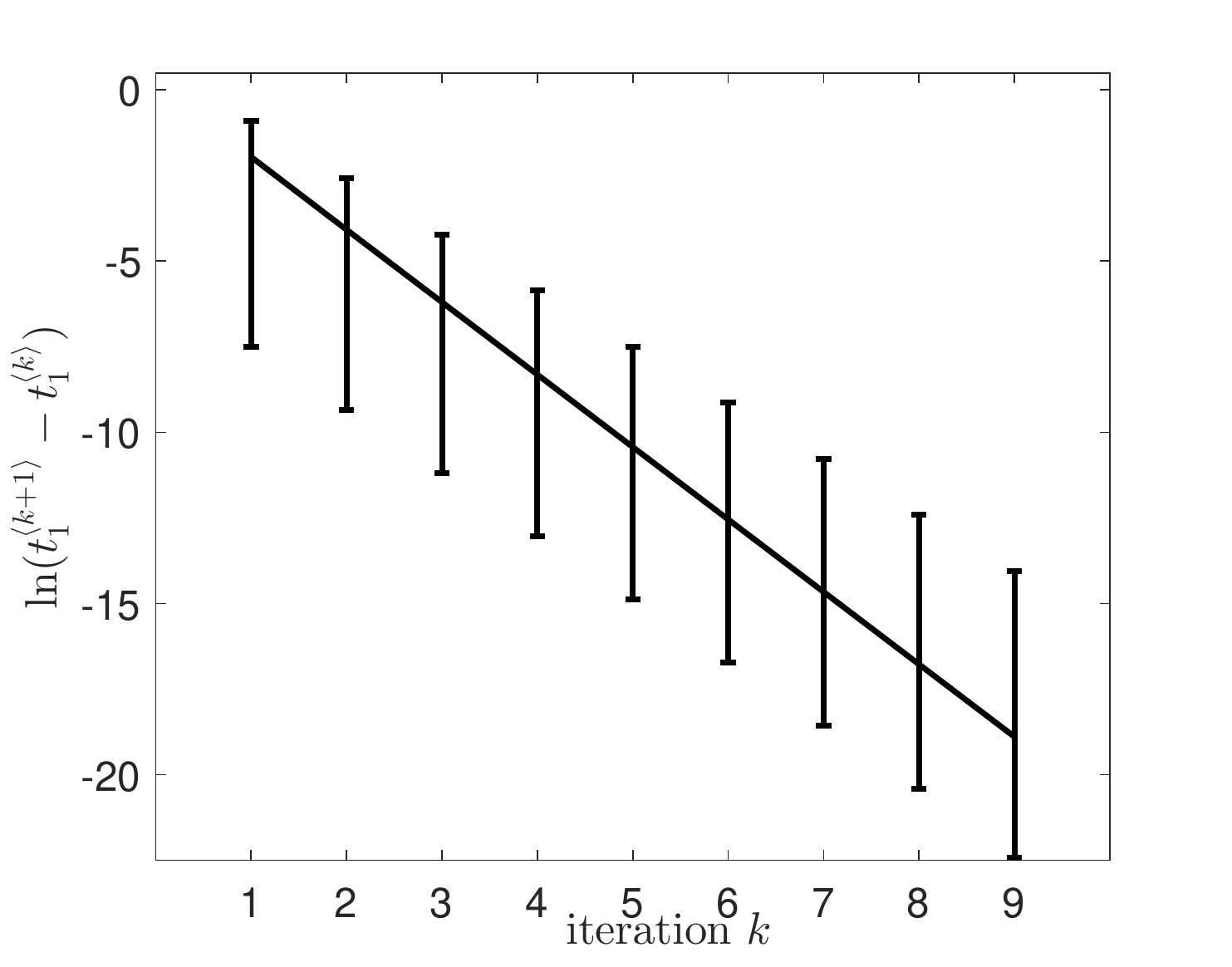}\\
\textbf{(a)}&\textbf{(b)}&\textbf{(c)}\\
\includegraphics[width=5cm]{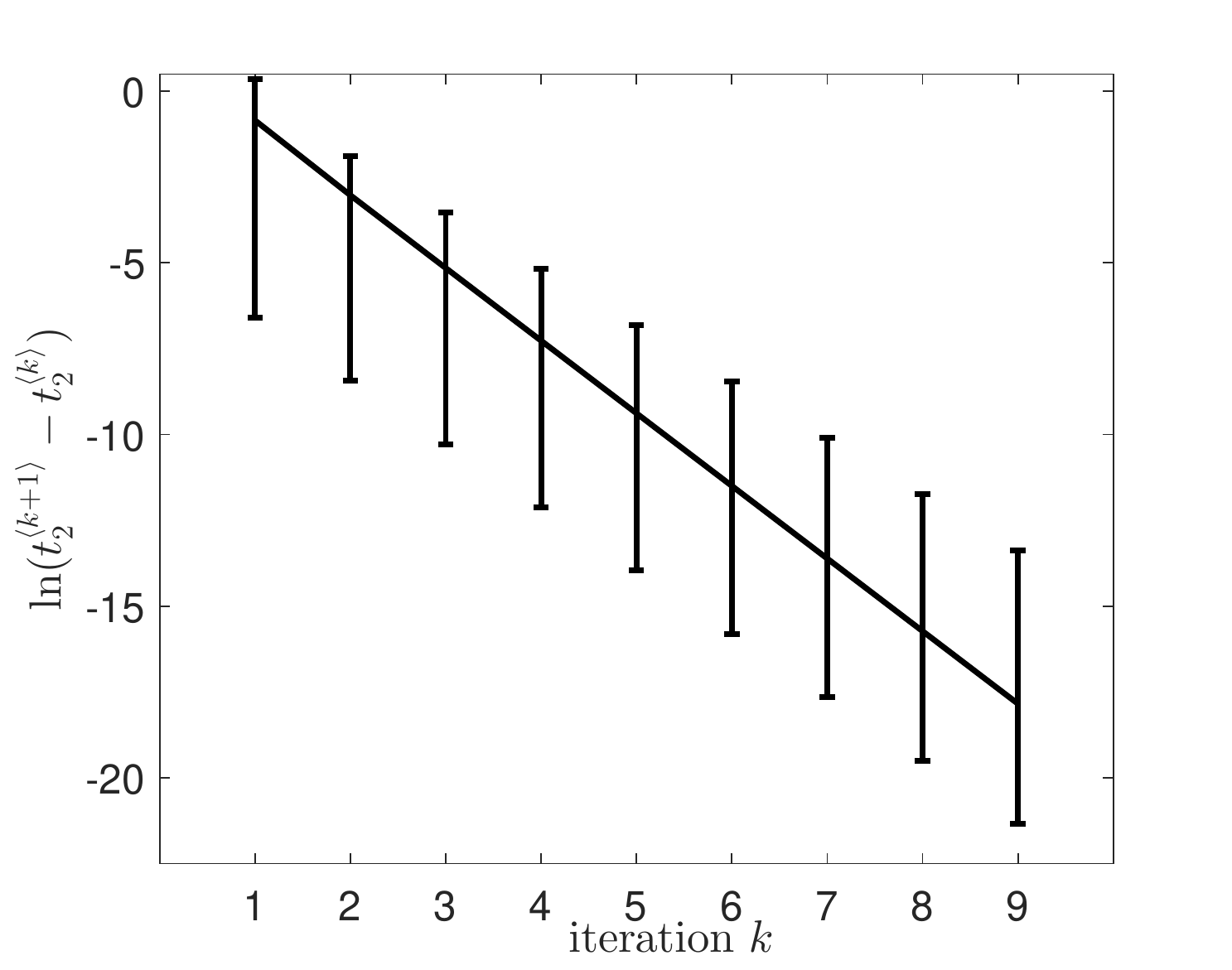}&\includegraphics[width=5cm]{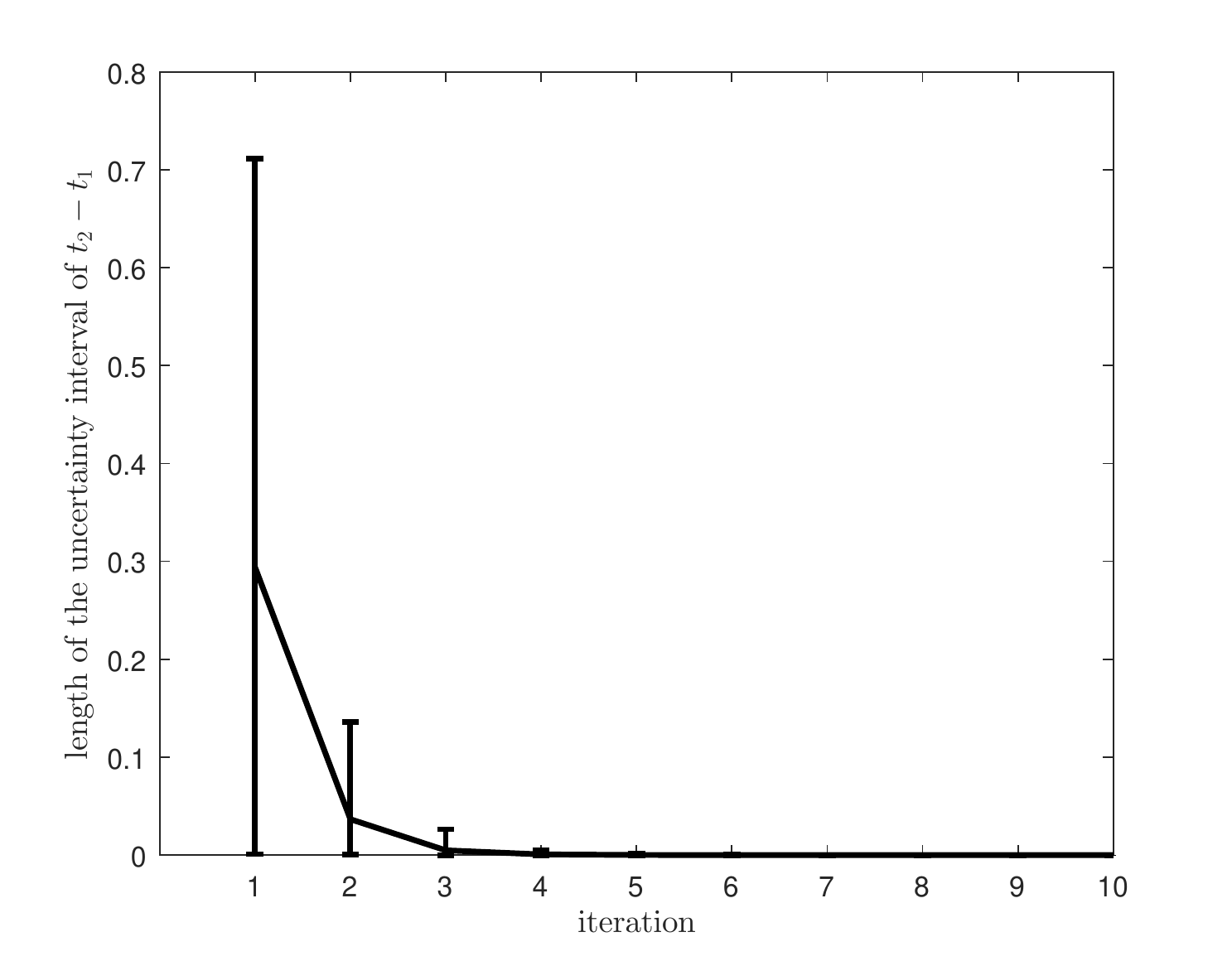}&\includegraphics[width=5cm]{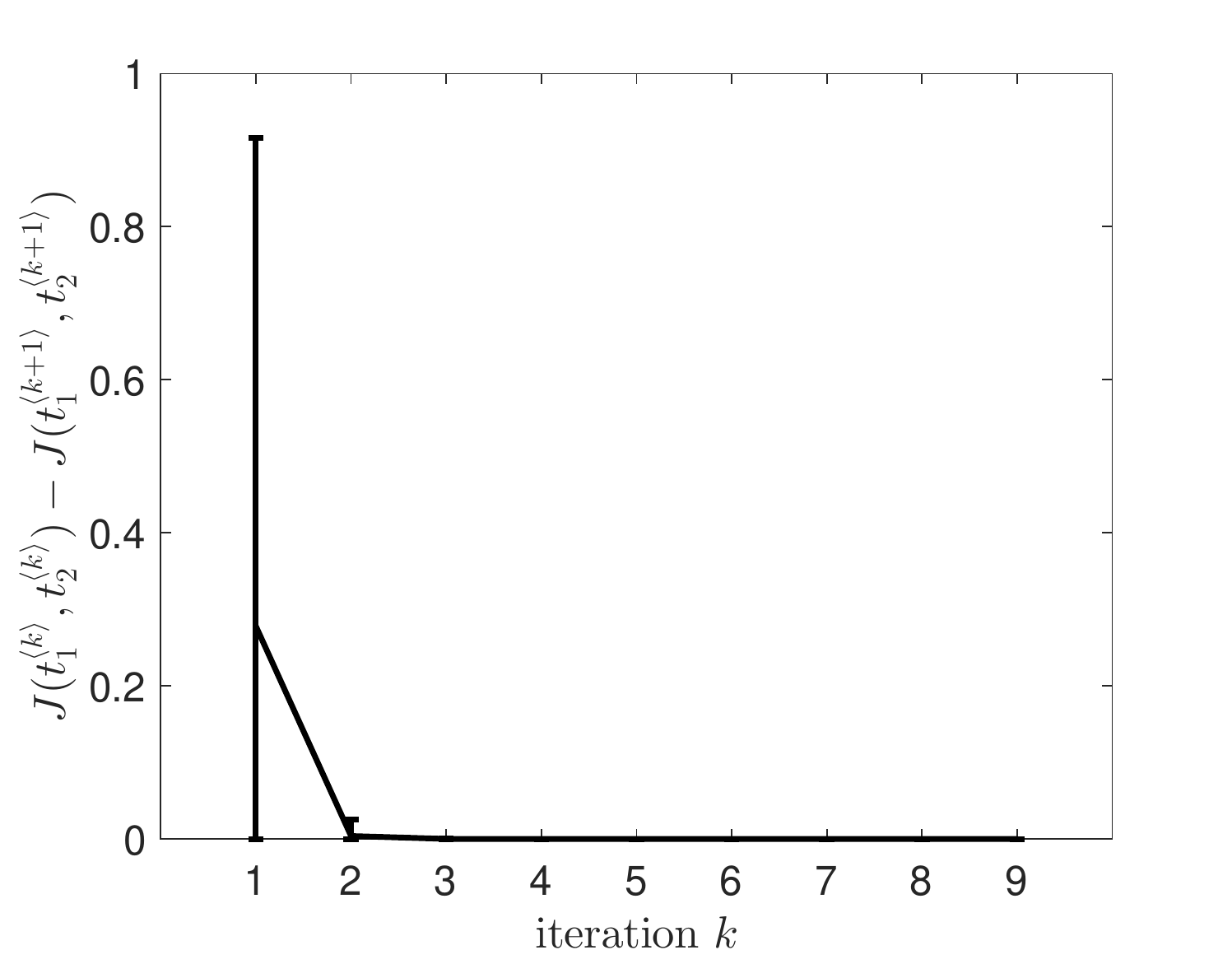}\\
\textbf{(d)}&\textbf{(e)}&\textbf{(f)}
\end{tabular}
\end{center}
\caption{Test performances of coordinate descent. The fixed parameters are $\Sig=1,T=10.$ $v_0,v_1,v_2$ have been drawn uniformly w.r.t. the Lebesgue measure from the part of the cube $[1,10]^3$ which corresponds to \textbf{regime 3}. \textbf{(a)} The increments of $t_1$. \textbf{(b)} The increments of $t_2$. \textbf{(c)} The natural logarithms of the increments of $t_1$. \textbf{(d)} The natural logarithms of the increments of $t_2$. \textbf{(e)} The difference $I^{(1)}_{\Sig,T,v_0,v_1,v_2}(t_1)-I^{(2)}_{\Sig,v_0,v_1,v_2}(t_1)$. According to Theorem \ref{thequat1opt}, the values of the functions $I^{(1)},I^{(2)}$ are estimations of the difference $t^{(2)}_{2,\text{\rm opt}}-t^{(2)}_{1,\text{\rm opt}}$ and they are equal only for the optimal value of $t_1$. \textbf{(f)} The decrements of the cost function $J$. The abscissa of every graph is the number of the step. The lines join the mean values of the corresponding quantities over all trials. The vertical error bars show the maxima and the minima.}
\label{figPerfsCordDescent}
\end{figure*}

\subsection{Comparison between the optimal and the regular schedules.}
\label{subsecComparer}

A numerical experiment of estimation of the gain of the optimal schedule compared to the intuitive sampling $(\frac{T}3,\frac{2T}3)$ has been done. The optimal schedules $(t^{(2)}_{1,\text{opt}},t^{(2)}_{2,\text{opt}})$ have been computed together with the associated costs $J_\text{opt}$ for $\Sig=1,T=1,v_0\in\{0,2,5\}$ and $v_1,v_2$ varying from $0$ to $5$. The costs $J_\text{reg}$ achieved with the regular sampling have been computed as well. Figure \ref{figGains2Measures} shows three contour plots of the gain $\frac{J_\text{reg}-J_\text{opt}}{J_\text{reg}}$ as functions of $v_1,v_2$.

\begin{figure*}[p] 
\begin{center}
\begin{tabular}{ccc}
\includegraphics[width=5cm]{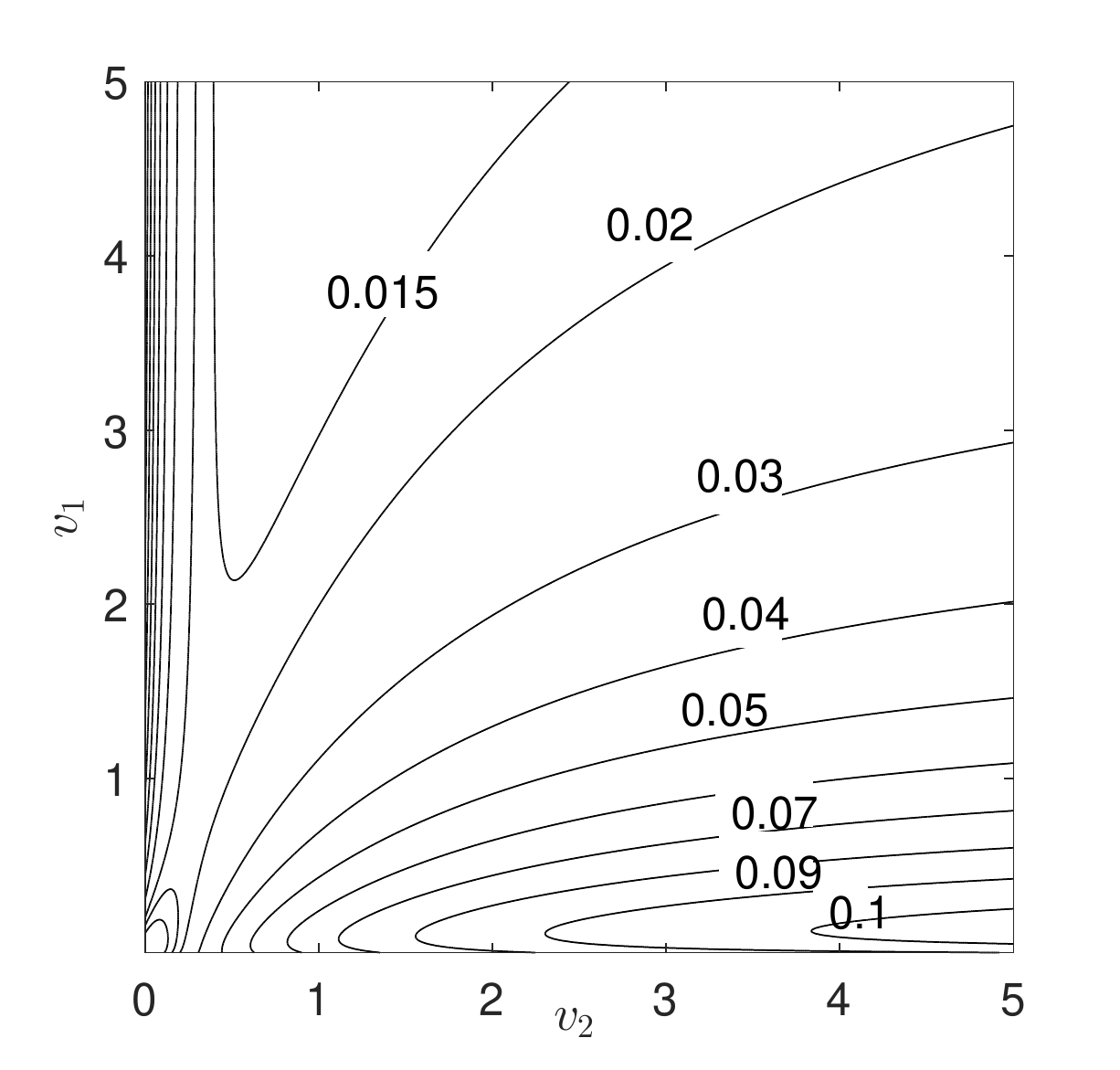}&\includegraphics[width=5cm]{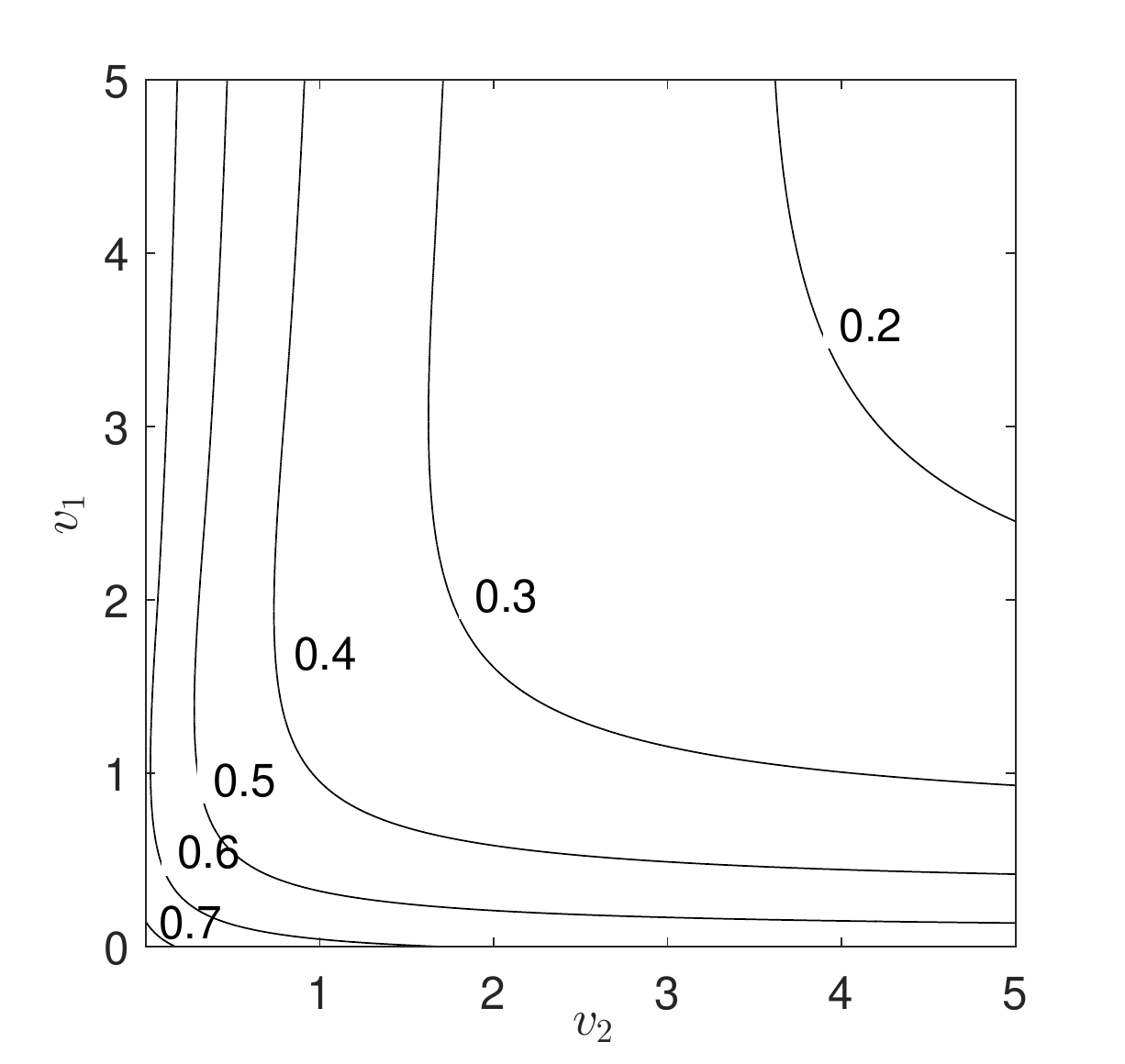}&\includegraphics[width=5cm]{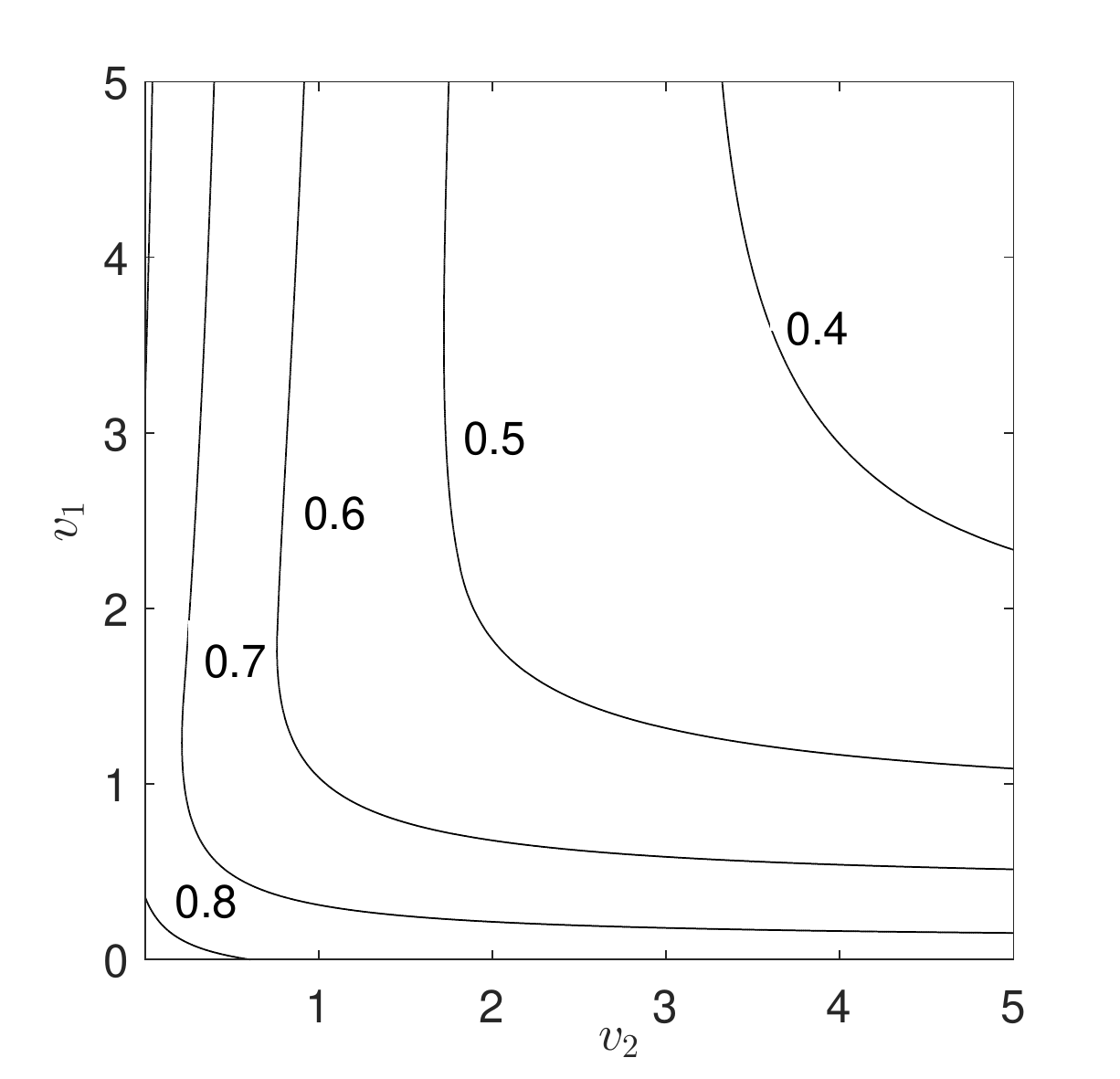}\\
\textbf{(a)}&\textbf{(b)}&\textbf{(c)}
\end{tabular}
\end{center}
\caption{The relative gain achieved by the optimal schedule as functions of $v_1,v_2\in[0,5]$. \textbf{(a)} $v_0=0$. The contours correspond to the values $0.01,0.015$, and $0.02$ to $0.1$ by steps of $0.01$. \textbf{(b)} $v_0=2$. \textbf{(c)} $v_0=5$.}
\label{figGains2Measures}
\end{figure*} 

These figures can be compared with the gain in case of $1$ measure (Figure \ref{figComparisonRegular}). For parameters in the considered range, the gain can reach up to $86\%$.

\section{Conclusion and Perspectives} 
\label{secPerspectives}
Sampling strategies for a phenomenon of finite length have been investigated under the assumption that the phenomenon can only be measured a small number of times by instruments with different properties (error variances). Irregular sampling can lead to a substantial gain in mean error variance of the estimator.

The assumption of a small number of available measures can be satisfied if the process itself is short or the measurement devices have a limited (and non-renewable) physical resource, e.~g., \cite{pinguins}. This can also happen if each measure is expensive.

A simple model is studied, where  the variance about the system parameters (here a single parameter) evolving over a finite period of time grows linearly in the absence of measure. The properties of the optimal measure timetable according to the criterion of minimization of the mean variance are considered. 

In Section \ref{sec1meas}, the particular case, where the instant of exactly $1$ measure is to be chosen, is studied in detail. Section \ref{sec2meas} is devoted to the particular case, where the instants of $2$ measures are to be chosen. 

The system can behave in different regimes. When the duration of the process is short, it is optimal to take all measures at the moment zero. If it is larger, than a critical value, one optimal instant of measure moves from zero to the inside of the interval. In the case of one measure, there is one critical duration, while in case of two measures there are two.

It is proved that the critical durations in case of 1 or 2 measures are increasing functions of $v_0$. This corresponds to a simple intuition: the larger $v_0$ is, the less exact is the information, the higher are chances that it should be supported by a measure. This corresponds to the intuition stated in the introduction: in the optimal sampling, the more precise measure may be made shortly after the less precise one.

The computations relative to the case $n=2$ (shown Figure \ref{figGains2Measures}) suggest that when $v_0\ll\Sig T,v_1,v_2$, the gain in comparison with the regular schedule is modest. On the other hand, it increases if the variance $v_0$ of the initial information increases or if the variances $v_1,v_2$ of the measures are very different. The first conclusion is also confirmed experimentally in case of $1$ measure (see Figure \ref{figComparisonRegular} \textbf{(b)}).

A setting, where a large number of measures are made under a constraint of periodic ``windows", is considered Section \ref{subsecIterated1meas}. The instants of measurement are determined using local optimization. It is shown that local optimization leads to regular sampling (Theorem \ref{thPeriodicFromAperiodic}) when the  number of measures is large. This result suggests that the global optimization is necessary for getting an improvement of performance.

One goal of the future research is to find the optimal (in the sense of the cost function \eqref{defCost}) measurement instants when the number of measures is $n>2$. The methods of this article can be adapted. Some qualitatively new conjectures also appear from the experiments in this setting. Allowing the number of measures to vary is another possible development of the results presented here.

In this problem, the order of the measures is fixed. It is also possible to allow it to vary. The main property of this problem is the fact that the cost function is no longer rational, but piecewise-rational. 

Another objective of the future research is to consider more complex models than the real Brownian motion considered presently.

\newpage   

\appendix[Pseudo-code of the coordinate descent algorithm.]
\label{Appendix}

\begin{figure}[H]
\begin{algorithmic}[1]
\IF {\(T\leqslant \frac{v_{0,1}}{\Sig\left(\frac{v_2}{v_{0,1}+v_2}+1\right)}\)}
\RETURN \((0,0)\) \textit{(regime 1)}
\ELSE
\STATE \(t_2^{\langle1\rangle}:=t_{\text{opt}}^{(1)}(\Sig,T,v_{0,1},v_2)\)
\IF {\(A(v_0,v_1,v_2)(\Sig t_2^{\langle1\rangle})^3+B(v_0,v_1,v_2)(\Sig t_2^{\langle1\rangle})^2+C(v_0,v_1,v_2)\Sig t_2^{\langle1\rangle}+D(v_0,v_1,v_2)\geqslant 0\)}
\RETURN \((0,t_2^{\langle1\rangle})\) \textit{(regime 2)}
\ELSE
\STATE Initialization \textit{(regime 3, coordinate descent)}
\STATE \(t_2:=t_2^{\langle1\rangle}\)
\STATE \(t_1:=t_1^{\langle1\rangle}=\argmin_{t_1}J_{\Sig,T,v_0,v_1,v_2}(t_1,t_2^{\langle1\rangle})\)
\REPEAT 
\STATE \(t_2:=t_2^{\langle k\rangle}=t_1^{\langle k-1\rangle}+t_{\text{opt}}^{(1)}(\Sig,T-t_1^{\langle k-1\rangle},(v_0+\Sig t_1^{\langle k-1\rangle})\parallelsum v_1,v_2)\) \STATE \(t_1:=t_1^{\langle k\rangle}=\argmin_{t_1}J_{\Sig,T,v_0,v_1,v_2}(t_1,t_2^{\langle k\rangle})\)
\UNTIL convergence
\ENDIF
\ENDIF
\end{algorithmic}
\caption{Compute the optimal instants of $2$ measures. Arguments: $\Sig,T,v_0,v_1,v_2\in\R_+^*$.}
\label{AlgoCoordDescent}
\end{figure}

\end{document}